\def\BState{\State\hskip-\ALG@thistlm}
\DeclareMathOperator{\sinc}{sinc}
\newtheorem{lemma}{Lemma}
\DeclarePairedDelimiter\floor{\lfloor}{\rfloor}
\newcommand\blfootnote[1]{%
  \begingroup
  \renewcommand\thefootnote{}\footnote{#1}%
  \addtocounter{footnote}{-1}%
  \endgroup
}
\begin{document} 

\title{Calculation of the Mean Strain of  Smooth \color{black} Non-uniform Strain Fields Using Conventional FBG Sensors}

\author{Aydin Rajabzadeh,
	Richard Heusdens,
        Richard C. Hendriks,
        and~Roger M. Groves
}





\maketitle 


\begin{abstract}
In the past few decades, fibre Bragg grating (FBG) sensors have gained a lot of attention in the field of distributed point strain measurement. One of the most interesting properties of these sensors is the presumed linear relationship between the strain and the peak wavelength shift of the FBG reflected spectra. However, subjecting sensors to a non-uniform stress field will in general result in a strain estimation error when using this linear relationship.
In this paper we propose a new strain estimation algorithm that accurately estimates the mean strain value in the case of  smooth \color{black} non-uniform strain distributions. To do so, we first introduce an approximation of the classical transfer matrix model, which we will refer to as the approximated transfer matrix model (ATMM). This model facilitates the analysis of FBG reflected spectra under arbitrary strain distributions, particularly by providing a closed-form approximation of the side-lobes of the reflected spectra. \color{black} Based on this new formulation, we derive a maximum likelihood estimator of the mean strain value. The algorithm  is validated using both computer simulations and experimental FBG measurements. Compared to state-of-the-art methods, which typically introduce errors of tens of microstrains, the proposed method is able to compensate for this error. In the typical examples that were analysed in this study, mean strain errors of around $60 \mu\varepsilon$ were compensated.

 \blfootnote{This paper was submitted for review on January 31, 2018. \\ \indent
Aydin Rajabzadeh is with the Circuits and Systems Group of the Electrical Engineering Faculty and also with the Structural Integrity and Composites Group of the Aerospace Faculty of Delft University of Technology, Delft, 2628 CD The Netherlands (e-mail: a.rajabzadehdizaji@tudelft.nl).\\ \indent
Richard Heusdens and Richard C. Hendriks are with the Circuits and Systems Group of the Electrical Engineering Faculty of Delft University of Technology, Delft, 2628 CD The Netherlands (e-mail: r.heusdens@tudelft.nl and r.c.hendriks@tudelft.nl).\\ \indent
Roger M. Groves is with the Structural Integrity and Composites Group of the Aerospace Faculty of Delft University of Technology, Delft, 2629 HS The Netherlands (e-mail: r.m.groves@tudelft.nl).
}
\end{abstract}

\begin{IEEEkeywords}
fiber Bragg grating, FBG, fiber optic sensing, reflected spectra, strain distribution, transfer matrix model.
\end{IEEEkeywords}


\IEEEpeerreviewmaketitle

\section{Introduction}

\IEEEPARstart{F}{ibre} Bragg grating (FBG) sensors have shown to be one of the most robust and versatile sensors in strain sensing applications. With the possibility of multiplexing several sensors on a single optical fibre and their immunity to electromagnetic interferences, FBG sensors are an ideal candidate for applications in adverse environments such as the oil and gas industries or aviation~\cite{childers2001use}. Because of their small diameter, they can be embedded inside composite structures to acquire internal strain measurements \cite{torres2011analysis,kuang2001embedded} and can also be used for damage detection purposes including delamination and matrix cracks~\cite{de2008health,okabe2000detection,takeda2007damage,takeda2005delamination}. In addition, strain (or temperature) changes along the length of the sensor are directly related to the peak wavelength shift of the reflected spectra of FBG sensors (the intensity of the reflected light), making strain measurements straightforward. This relationship will be described in detail in Section~\ref{Theory}. However, FBG sensors have several complications which need to be addressed properly. The first problem is that the strain and temperature jointly contribute to the peak wavelength shift. In order to alleviate this problem, there have already been several extensive studies, either using multiple sensors with compensation for local temperature variations \cite{kersey1997fiber,montanini2007simultaneous}, or taking advantage of different grating structures such as tilted FBG sensors \cite{caucheteur2005simultaneous,alberto2010three}, multiple chirped FBG sensors \cite{gan2011high}, and birefringent FBG sensors with multiple independent peaks \cite{abe2004superimposed}. 

The second problem with FBG strain measurements is the sensor response under non-uniform strain distributions. Under uniform strain distributions, the relationship between the strain and the peak wavelength shift of the FBG reflection spectrum is linear. However, this does not necessarily hold under non-uniform strain distributions. Such non-uniformities could be the result of embedding the sensor in or surface mounting on a structure, when a non-uniform stress is applied. For instance, an embedded FBG sensor in a composite structure in the vicinity of a crack or delamination defect experiences different non-uniform strain distributions and, therefore, responds with rather distinguishable reflected spectra \cite{rajabzadeh2017classification,okabe2000detection,mizutani2011multi,takeda2007damage,takeda2005delamination}. The resulting non-uniform strain distributions under these circumstances could most possibly complicate strain measurements using FBG sensors.

Despite the fact that in many practical situations the strain distribution is non-uniform, existing algorithms for estimating the mean strain value are still based on the 
shift of the peak wavelength.
When the sensor is subject to non-uniform strain fields, however, each segment of the sensor will experience a different strain, resulting in different peak wavelength shifts
along the length of the sensor. As a 
consequence, looking to the (global) shift of the peak wavelength will in general lead to an estimation error in the mean strain value. 
In this paper we will propose a new algorithm that accurately estimates the mean strain value when the strain distribution is non-uniform. We will show that  
the mean strain value is related to the average shift of the peak wavelength along the length of the sensor and that this information can be found in the side lobes of
the reflected spectra. In order to analyse the reflected spectra we will present a model that describes the interaction of the forward and backward electric wave propagation 
between consecutive segments, which is an approximation of the widely used transfer matrix model (TMM). This approximated transfer matrix model (ATMM)  enables us to accurately 
find the average wavelength shift.  The codes for the approximated transfer matrix model and the mean strain estimation algorithm can be accessed online\footnote{ http://cas.tudelft.nl/Repository/}.

\color{black}

\section{Background}\label{Theory}
FBG sensors are spatially modulated patterns of refractive index changes in the core of optical fibres that act as a mirror for certain wavelengths. The linear relationship between the peak wavelength of the reflected spectra, usually referred to as the Bragg wavelength and denoted by $\lambda_B$, and the grating period $\Lambda$ of the FBG is given by \cite{hill1997fiber,melle1992passive}
\[
\lambda_B=2n_{\rm eff}\Lambda \,,
\]
where $n_{\rm{eff}}$ is the effective refractive index of the core. Under uniform strain (or temperature) fields, say $s$, the change in the grating period is constant along the length of the sensor, resulting in a shift $\Delta \lambda_{B}$ of the Bragg wavelength. That is,
\begin{equation}
\Delta \lambda_{B} = k_s  s,
\label{eq:ks}
\end{equation}
where the constant $k_s$ depends on the physical properties of the sensor\footnote{ For the sensors used in this study, based on the datasheet, $k_s=1.209\times 10^{-3} {\rm nm}/ \mu\varepsilon$.}.
In theory, this is without any change in the morphology of the reflected spectra, and the peaks will either be shifted towards shorter wavelengths (under compression) or longer wavelengths (under tension). Therefore, under a uniform strain field, (\ref{eq:ks}) results in an accurate measurement of the mean strain value over the length of the sensor. However, perfect uniform strain distributions are unlikely in practice. The more common strain fields are non-uniform and result in nontrivial overall reflected spectra, possibly asymmetric and having multiple peaks, leading to an error in the mean strain estimation. In order to compensate for this error, the full spectrum of the signal should be analysed, for which the transfer matrix method has been shown to be a proper tool. 

\subsection{Transfer matrix model} \label{TMM}
\label{sec:tmm}

Consider the case of a non-uniform strain distribution over the length of the FBG sensor.
We will divide the length of the sensor into a series of small segments of length $\Delta z$, where 
$\Delta z$ is taken sufficiently small such that each of these segments has approximately a uniform strain distribution. 
As a consequence, the strain distribution will affect each segment's grating period differently. To be more precise, let $s_i$ denote the strain field of segment $i$.
The Bragg wavelength shift $\Delta \lambda_{B_i}$ of segment $i$ is then given by
\[
\Delta \lambda_{B_i} = \lambda_{B_i} - \lambda_B = k_s s_i,
\]
where $\lambda_{B_i}$ is the Bragg wavelength of segment $i$. As a consequence, the mean strain distribution, denoted by $\bar{s}$,
satisfies
\begin{equation}
 k_s \bar{s} = \bar{\lambda}_B - \lambda_B,
\label{eq:smean}
\end{equation}
where $\bar{\lambda}_B = \frac{1}{M} \sum_i\lambda_{B_i}$ is the mean Bragg wavelength of the sensor. Hence, when the sensor is subject to non-uniform strain fields, each segment of the model experiences a different Bragg wavelength shift and 
$\bar{\lambda}_B$ does not necessarily correspond to the peak wavelength anymore.

In order to analyse the FBG reflected spectra under an arbitrary grating distribution, we will make use of the transfer matrix model (TMM) \cite{yamada1987analysis}. The TMM models the interaction of the forward and backward electric wave propagation between consecutive segments, where it is assumed that the length of the individual segments $\Delta z$ satisfies $\Delta z \gg \Lambda$. Let $A_i$ and $B_i$ denote the forward and backward propagating waves in segment $i$, respectively (see Fig.~\ref{fig:wave}). In this model, it is assumed that at the end of the final segment, there will be a full transmission of the incident wave ($A_0=1$) and no reflection from further along the optical fibre ($B_0=0$). 
\begin{figure}[t]
\center
\includegraphics[width=.45\textwidth]{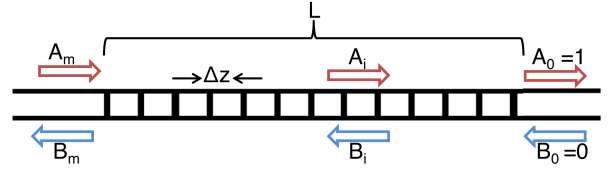}
\caption{Schematic view of the FBG structure.}\label{fig:wave}
\end{figure}

In each segment, the relation between the forward and the backward propagating waves can be described by the following relation \cite{yamada1987analysis}
\[
\begin{pmatrix}
  A_{i} \\
  B_{i} \\ 
 \end{pmatrix}
 =F_i 
 \begin{pmatrix}
  A_{i-1} \\
  B_{i-1} \\ 
 \end{pmatrix},
\]
where 
{\small
\begin{equation}
F_i=
 \begin{pmatrix}
  \cosh(\gamma_i \Delta z)-j\frac{\Delta \beta_i}{\gamma_i}\sinh(\gamma_i \Delta z) \,\, &\hspace{-0em}  -j \frac{\kappa_i}{\gamma_i}\sinh(\gamma_i\Delta z) \\
  \hspace{-6em} j \frac{\kappa_i}{\gamma_i}\sinh(\gamma_i\Delta z) \,\, & \hspace{-6em} \cosh(\gamma_i \Delta z)+j\frac{\Delta \beta_i}{\gamma_i}\sinh(\gamma_i \Delta z) \\
 \end{pmatrix}, \label{eq:Fi}
\end{equation}}

\vspace{-\baselineskip}
\noindent
is the tranfer matrix of segment $i$ having elements $F_{i_{11}}$, $F_{i_{12}}$, $F_{i_{21}}$ and $F_{i_{22}}$.
Note that  $F_{i_{11}} = F^*_{i_{22}}$ and  $F_{i_{12}} = F^*_{i_{21}}$,  where the superscript $\mbox{}^*$ denotes complex conjugation.
In (\ref{eq:Fi}), $\kappa_i$ is the coupling coefficient between forward and backward waves, $\Delta \beta_i = 2 \pi n\textsubscript {eff} (\frac{1}{\lambda}-\frac{1}{\lambda \textsubscript B_i})$ is the difference between the propagation constants in the longitudinal (or $z$) direction, $\gamma_i = \sqrt {\kappa_i ^ 2-\Delta \beta_i ^ 2} $, and $\lambda$ is a given wavelength under investigation.  
Assume that the total number of segments considered in the model is $M$. After calculating the matrix $F_i$ for each segment (for a given wavelength), the relationship between the backward and forward electric wave amplitudes between the $0^{th}$ and the $M^{th}$ segment is given by

\[
\begin{pmatrix}
  A_M \\
  B_M \\
 \end{pmatrix}
 =
 F
 \begin{pmatrix}
  A_{0} \\
  B_{0} \\
 \end{pmatrix},
\]
in which
\[
F= \prod_{i=1}^M F_i.
\]
The reflected spectrum, say $R(\lambda)$, is then determined as
\[
R(\lambda)={\left|\frac{B_M}{A_M}\right|}^2 = {\left|\frac{F_{21}}{F_{11}}\right|}^2,
\]
where $F_{21}$ and $F_{11}$ are  entries of the composite matrix $F$. In the next section, an approximation of the transfer matrix model will be 
introduced  which significantly simplifies the analysis of FBG reflected spectra. 

\section{Approximated transfer matrix model} 
\label{ATMM}
\label{sec:ATMM}

In this section we will show that, for sufficiently small $\Delta z$, the TMM can be accurately approximated, resulting in a model that facilitates the analysis of non-uniform strain fields. We will refer to this approximated model as the approximated transfer matrix model (ATMM). Note that the length $\Delta z$ can be chosen to be arbitrarily small (as long as $\Delta z \gg \Lambda$) and that this choice has no effect on the physical properties of the sensor. \\
\indent
Suppose that $\Delta z$ is sufficiently small. We will show that in this case the matrix $F_i$ can be approximated as
\begin{equation}
F_i \approx
 \begin{pmatrix}
  e^{-j \Delta \beta_i \Delta z} &
  -j\kappa_i \Delta z\sinc(\Delta \beta_i \Delta z)\\
  j\kappa_i \Delta z\sinc(\Delta \beta_i \Delta z) &  e^{j \Delta \beta_i \Delta z} \\
 \end{pmatrix}, \label{eq:Fi_hat}
\end{equation}
where $\sinc(\cdot)=\frac{\sin(\cdot)}{(\cdot)}$.\\
\indent
To show under what conditions the approximations hold, an element-wise comparison will be made between the formulations in (\ref{eq:Fi}) and  (\ref{eq:Fi_hat}). Recall that $\Delta \beta_i = 2 \pi n\textsubscript {eff} (\frac{1}{\lambda}-\frac{1}{\lambda_{B_i}})$ and $\gamma_i = \sqrt {\kappa_i ^ 2-\Delta \beta_i ^ 2} $, where $\lambda_{B_i}$ is the Bragg wavelength of segment $i$. In the near infrared wavelength range, which is the region of interest for almost all FBG sensors, we have ${\Delta \beta_i}^2 \gg {\kappa_i}^2$, and thus $\gamma_i \approx j|\Delta  \beta_i|$, except for a small wavelength range centred around $\lambda_{B_i}$ (a few tens of picometres which depends on the production and reflectivity levels of the FBG sensor) where the values of $\kappa_i$ and $\Delta\beta_i$ are of the same order of magnitude. In order to make the analysis of the formulation easier, we divide the wavelength range into the two above mentioned regions, as different strategies are needed to verify the correctness of the approximation for these regions.
\vspace{.5\baselineskip}\\
\noindent
\textbf{Region 1}: In this region $|\Delta \beta_i|\gg \kappa_i$. As already mentioned, for this region $\gamma_i \approx j|\Delta \beta_i|$, and therefore, the first term of $F_{i_{11}}$ in (\ref{eq:Fi}), i.e., $\cosh(\gamma_i\Delta z)$, can be approximated as\\
\[
\cosh(\gamma_i \Delta z)\approx \cosh(j|\Delta \beta_i| \Delta z)=\cos(\Delta \beta_i \Delta z),
\]
where we omit the absolute value since cos is an even-symmetric function. The second term of $F_{i_{11}}$ in (\ref{eq:Fi}) can in this wavelength region be approximated as 
\begin{align*}
-j \frac{\Delta \beta_i}{\gamma_i}\sinh(\gamma_i \Delta z) &\approx - {\rm sign}(\Delta \beta_i) \sinh(j|\Delta \beta_i| \Delta z) \\
&=-j\sin(\Delta \beta_i \Delta z),
\end{align*}
where we omit the absolute value since sin is an odd-symmetric function. Combining
these relations gives the required result for $F_{i_{11}}$ in (\ref{eq:Fi_hat}). The remaining term ($F_{i_{21}}$) can in this wavelength region be approximated as
\begin{align*}
 j\frac{\kappa_i}{\gamma_i}\sinh(\gamma_i\Delta z)\! &\approx \! j\frac{\kappa_i}{|\Delta \beta_i|}\sin(|\Delta \beta_i| \Delta z)\!  \\
 &= \!j\kappa_i \Delta z \, \sinc(\Delta \beta_i \Delta z).
\end{align*}
\vspace{-.5\baselineskip}\\
\noindent
\textbf{Region 2}: In this wavelength range, the values of $\kappa_i$ and $\Delta \beta_i$ are of the same order of magnitude so that $|\gamma_i\Delta z| \ll 1$. Hence, the Taylor series expansion for the first term of $F_{i_{11}}$ in (\ref{eq:Fi}) will be
\[
\cosh(\gamma_i \Delta z)= 1+\frac{1}{2}|\gamma_i \Delta z|^2+\frac{1}{4!}|\gamma_i \Delta z|^4+ \cdots . 
 \label{eq:approx21}
\]
In order to keep the representation of the ${F}_i$ matrices consistent in both regions, we replace $\cosh(\gamma_i \Delta z)$ by $\cos(\Delta \beta_i \Delta z)$, which leads in Region 2 to an absolute error of 
\begin{align}
&|\cosh(\gamma_i \Delta z) - \cos(\Delta \beta_i \Delta z)| \\  \nonumber
&=\frac{1}{2}\Delta z^2\left(|\Delta \beta_i|^2+ |\gamma_i|^2\right)+\frac{1}{6!}\Delta z^6\left(|\Delta \beta_i|^6+ |\gamma_i|^6\right) + \cdots , \nonumber
\end{align}
\noindent
which is negligible when $\Delta z$  is sufficiently small\footnote{ In this study, using computer simulations, it was seen that for $\Delta z \leq 0.001 \, \textrm{m}$ (corresponding to $M\geq 10$ for a sensor of length $1\, \rm{cm}$), the relative error of the amplitude of the reflected spectra was less than $0.05\%$.}. Along the same lines and using the Taylor series expansion of sin and sinh, the second term of $F_{i_{11}}$ is approximated as
\[
j\frac{\Delta \beta_i}{\gamma_i}\sinh(\gamma_i \Delta z)\approx j\frac{\Delta \beta_i}{\gamma_i}(\gamma_i  \Delta z) \approx j\sin(\Delta \beta_i \Delta z).
\]
\noindent
From this we see that the approximation of $F_{i_{11}}$ from (\ref{eq:Fi}) to (\ref{eq:Fi_hat}) can be argued to be also accurate in region 2. With respect to $F_{i_{21}}$ in (\ref{eq:Fi}), this can in a similar way be approximated by
\[
j\frac{\kappa_i}{\gamma_i}\sinh(\gamma_i\Delta z)\approx  j\kappa_i \Delta z
\approx j\kappa_i \Delta z \sinc(\Delta \beta_i \Delta z),
\]
\indent
which shows that the approximation of (\ref{eq:Fi}) by (\ref{eq:Fi_hat}) can be argued to be accurate also in region 2. Finally, we will rewrite (\ref{eq:Fi_hat}) in a slightly more convenient form by a variable substitution. As stated in Section \ref{TMM}, the difference between the propagation constants is $\Delta \beta_{i}=2\pi n_{\rm {eff}}(\frac{1}{\lambda}-\frac{1}{\lambda_{B_i}})$. Therefore, the argument $\Delta \beta_i \Delta z$ in (\ref{eq:Fi_hat}) can be expressed as $\Delta \beta_i \Delta z=\alpha-\alpha_i$, where 
\begin{equation}
\alpha=\frac{2\pi n_{\rm{eff}}\Delta z}{\lambda} \textrm{ and } \alpha_i=\frac{2\pi n_{\rm{eff}}\Delta z}{\lambda_{B_i}}.
\label{eq:alphas}
\end{equation} 
With this, (\ref{eq:Fi_hat}) can be rewritten as
\begin{equation}
{F_i}=
 \begin{pmatrix}
  e^{-j (\alpha-\alpha_i)} &
  -j\kappa_i \Delta z\sinc(\alpha-\alpha_i)\\
  j\kappa_i \Delta z\sinc(\alpha-\alpha_i) &  e^{j (\alpha-\alpha_i)} \\
 \end{pmatrix},
 \label{eq:Fi_hatf}
\end{equation}
which will be the basis of the analyses in this paper.

\section{Mean strain estimation} \label{meanerrorcomp}

To get a better understanding of the problem with the inaccuracy of  (\ref{eq:ks}) under non-uniform strain distributions, this section will start with an example in which the simulation results of an FBG sensor under non-uniform strain distributions are presented. Suppose the strain distribution of Fig.~\ref{fig:dists1}, with an average strain of $253.5 \mu \varepsilon$, is applied over the length of the FBG sensor. The resulting FBG reflected spectrum will have a non-symmetrical shape and is depicted in Fig.~\ref{fig:dists2}.
\begin{figure*}[h]
\begin{subfigure}{.5\textwidth}
  \centering
  \includegraphics[width=1\textwidth]{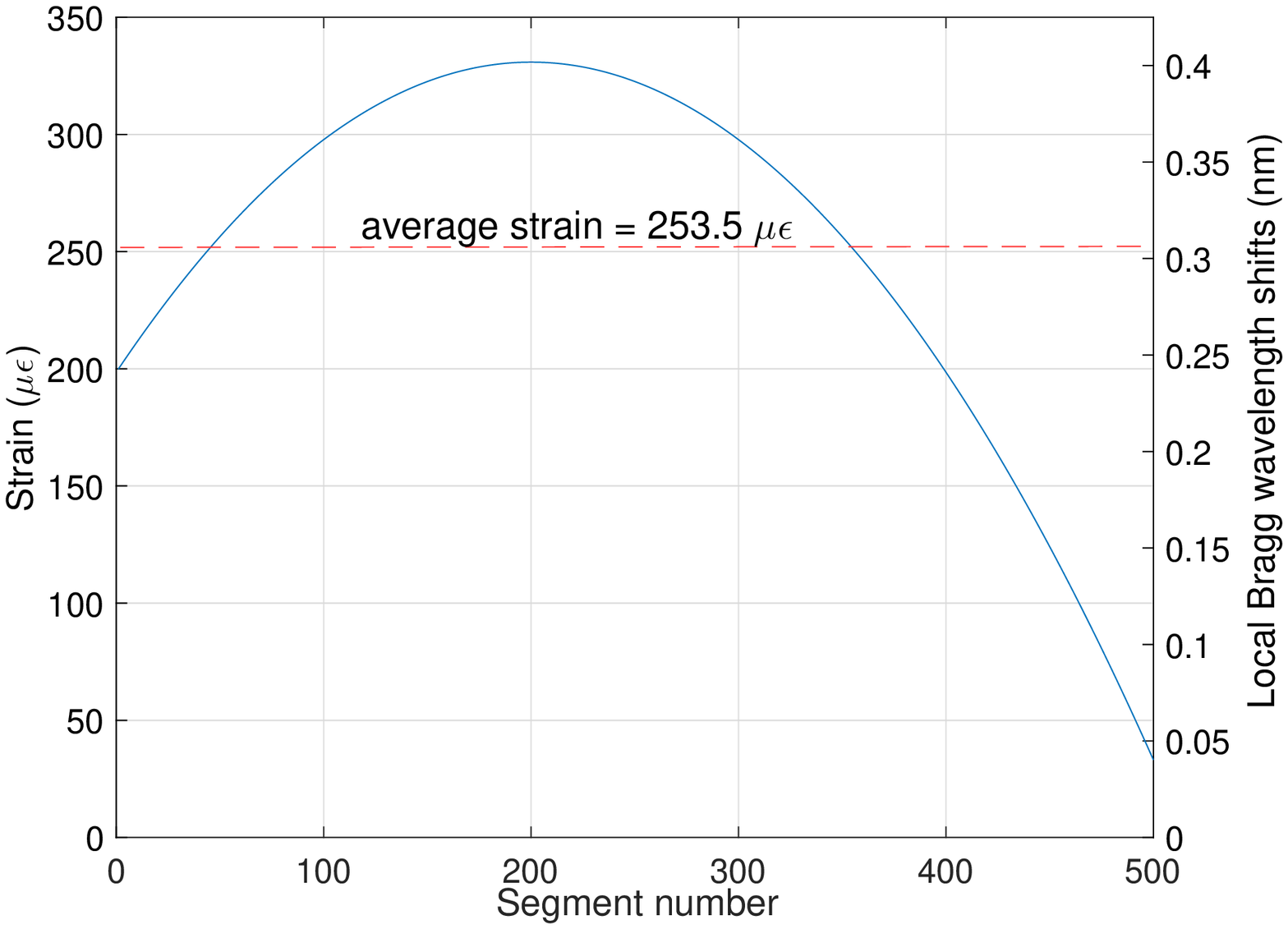}
  \caption{}
  \label{fig:dists1}
\end{subfigure}%
\begin{subfigure}{.5\textwidth}
  \centering
  \includegraphics[width=1\textwidth]{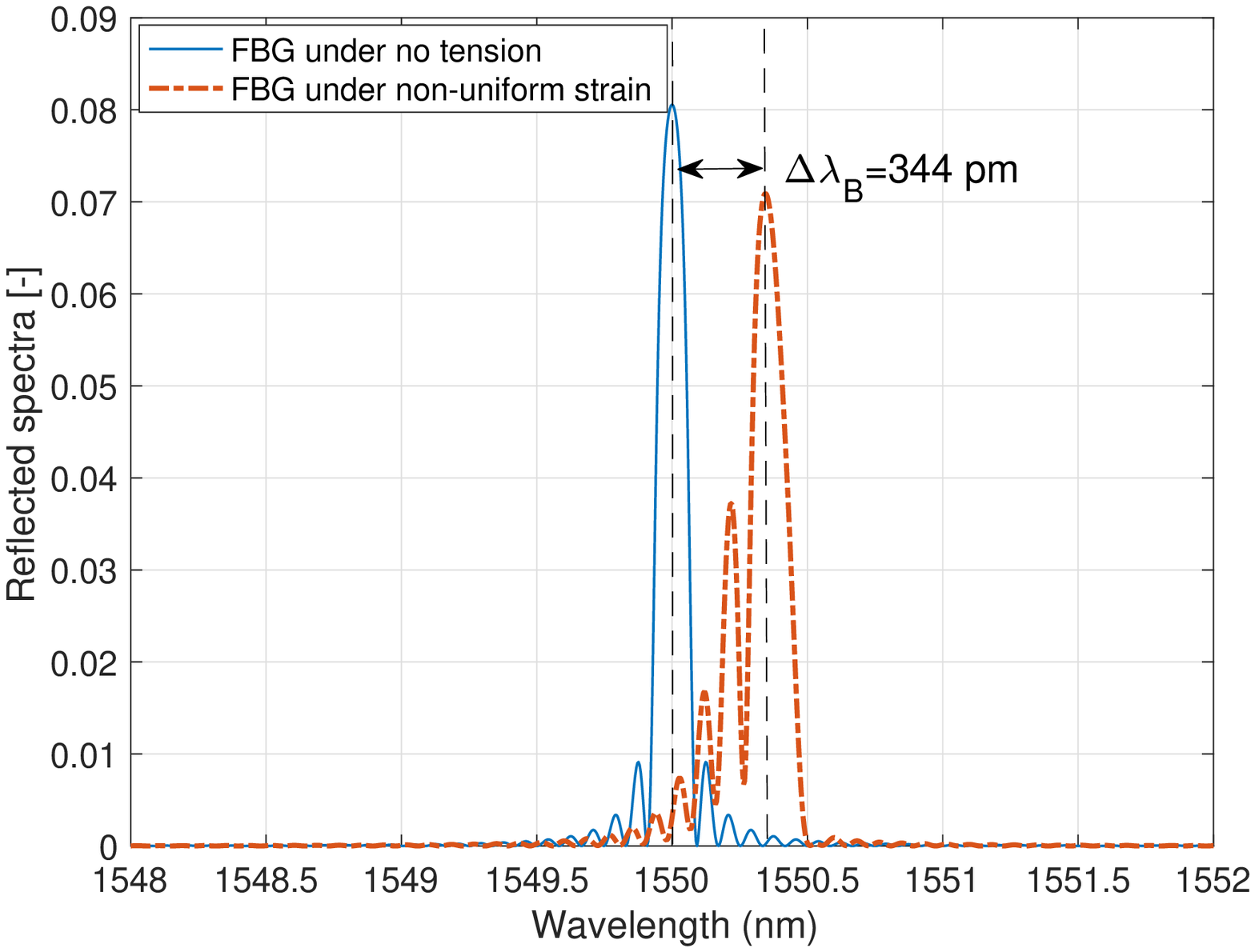}
  \caption{}
  \label{fig:dists2}
\end{subfigure}
\caption{(a): A non-uniform strain distribution (b): The resulted FBG reflected spectra.}
\end{figure*}
According to Fig.~\ref{fig:dists2}, the distance between the peak wavelength of the reflected spectra of the undisturbed and strained FBG sensor is $344 \, \rm{pm}$, which corresponds to a (mean) strain of $344/k_s=284.5 \, \mu \varepsilon$. Hence, in this example, an error of approximately $31\, \mu \varepsilon$ is introduced which needs to be compensated.

By inspection of \eqref{eq:smean} we conclude that when subject to a non-uniform strain distribution, instead of finding the peak wavelength of the reflected spectrum of the sensor, we need to find the average  Bragg wavelength $\bar{\lambda}_B$ of the sensor. 
In the next subsection, we will see that $\bar{\lambda}_B$ can be found by inspection of the  side lobes of the FBG reflected spectra and does not necessarily correspond to the peak wavelength. To do so,  we will use the ATMM to derive a closed-form approximation for the side lobes.
\color{black}
\subsection{Closed-form approximation of the side lobes} \label{sidepeaks}

In what follows, we assume the coupling coefficient ($\kappa_i$) to be constant and equal to $\kappa$ throughout the length of the sensor. The reason for this assumption is that first, the coupling coefficient does not affect the oscillation frequency of the FBG reflected spectra in neither main nor side peaks, and second, due to the fact that it often has only a small variation along the length of the sensor, its effect on the amplitude is negligible. 

Suppose there is an arbitrary non-uniform strain distribution over the length of the FBG sensor. Each segment $i$ of the FBG sensor undergoes a local strain $s_i$, resulting in an asymmetric overall reflected spectrum. Multiplying all $M$ approximated transfer matrices ${F}_i$ defined in (\ref{eq:Fi_hatf}), resulting in the composite matrix ${F}$, entry ${F}_{11}$ will have the form 
\begin{align}
& \!\!\!{F}_{11}= e^{-j(M\alpha-\sum_{i=1}^{M} \alpha_i)} \nonumber \\
& +{\sum_{n=1}^{\floor*{\frac{M}{2}}}\sum_{l=1}^{\binom {M} {2n}} \{ \prod_{i \in x_{l}}{} \!\!\! (-1)^n \kappa \Delta z \sinc(\alpha-\alpha_{i}) \prod_{i \in x_{l}^C}{}  e^{(-1)^{v}j(\alpha-\alpha_{i})}\}},
\label{eq:Ftot11}
\end{align}
where $x_{l} \in X_{2n}$ with $X_{2n}$ being the set of all possible combinations of $2n$ numbers taken from the set $\Omega = \{1,2,...,M\}$. As an example, assuming $n=1$, we will have $X_{2}=\{\{1,2\} , \{1,3\}, ... \{M-1,M\}\}$. Also, $x_{l}^C=\Omega \setminus x_{l}$ is the complement of the set $x_{l}$ in $\Omega$, and $v \in {\{0,1\}}$ which depends on the set $x_{l}$. Similarly, we find that

\begin{align}
 {F}_{21}\! &= \sum_{i=1}^{M} \, \kappa \Delta z \sinc(\alpha-\alpha_i)e^{\!\!-j\left(\rule[3mm]{0mm}{0mm} \right. \!\! (M-2i+1)\alpha+\sum \limits_{k<i}{\alpha_k}-\sum \limits_{k>i}{\alpha_k} \!\! \left. \rule[3mm]{0mm}{0mm} \right)} \nonumber \\
&\hspace{-6mm}+{\sum_{n=1}^{\floor*{\frac{M}{2}}}\!\!\sum_{l=1}^{\binom {M} {2n+1}} \!\! \{ \prod_{i \in y_{l}}{} \!\! (-1)^n \kappa \Delta z \sinc(\alpha-\alpha_{i})\!\!\prod_{i \in y_{l}^C}{} \!\!e^{(-1)^{v}j(\alpha-\alpha_{i})}\}},
\label{eq:Ftot21}
\end{align}
where $y_{l} \in Y_{2n+1}$ with $Y_{2n+1}$ being the set of all possible combinations of $2n+1$ numbers taken from $\Omega$.
By inspection of (\ref{eq:Ftot11}), it can be seen that when the sinc terms are sufficiently damped, the dominant term will be the first exponential, which has magnitude $1$. As a consequence
\begin{equation}
{R}(\lambda)\approx \left|{F}_{21}\right|^2 ,\,\,\,\,\,\,\textrm{for} \,\, \forall i \in \Omega, \,\, |\lambda-\lambda_{B_i}| > \lambda_{\rm th},\\
\label{eq:Rsimp}
\end{equation}
where $\lambda_{\rm th}>0$ is a threshold wavelength for which \eqref{eq:Rsimp} holds. Note that the condition $|\lambda-\lambda_{B_i}| > \lambda_{\rm th}$ is approximately the same as $|\alpha-\alpha_i| > \alpha_{\rm th}$. 
Using similar arguments, it can be shown that the dominant terms in (\ref{eq:Ftot21}) are given by those in the first summation, as the second summation contains products of sinc functions whose amplitudes are small when $|\lambda-\lambda_B| > \lambda_{\rm th}$.
For notational convenience, let $\xi_i = \frac{\kappa_i \Delta z}{2j(\alpha-\alpha_i)}$. With this, we have
\begin{align}
F_{21} &\approx \sum_{i=1}^{M} \, \kappa \Delta z \sinc(\alpha-\alpha_i)e^{\!\!-j\left(\rule[3mm]{0mm}{0mm} \right. \!\! (M-2i+1)\alpha+\sum \limits_{k<i}{\alpha_k}-\sum \limits_{k>i}{\alpha_k} \!\! \left. \rule[3mm]{0mm}{0mm} \right)}\nonumber \\
&\stackrel{(a)}{=} \sum_{i=1}^{M}\xi_i \left( \rule[4.5mm]{0mm}{0mm} \right. e^{\!\!-j\left(\rule[3mm]{0mm}{0mm} \right. \!\! (M-2i)\alpha+\sum \limits_{k\leq i}{\alpha_k}-\sum \limits_{k>i}{\alpha_k} \!\! \left. \rule[3mm]{0mm}{0mm} \right)} \nonumber \\
&\hspace{18mm}-  e^{\!\!-j\left(\rule[3mm]{0mm}{0mm} \right. \!\! (M-2(i-1))\alpha+\sum \limits_{k<i}{\alpha_k}-\sum \limits_{k\geq i}{\alpha_k} \!\! \left. \rule[3mm]{0mm}{0mm} \right)} \left. \rule[4.5mm]{0mm}{0mm} \right) \nonumber \\
&\stackrel{(b)}{=} \sum_{i=1}^{M-1} \left(\xi_i - \xi_{i+1}\right) e^{\!\!-j\left(\rule[3mm]{0mm}{0mm} \right. \!\! (M-2i)\alpha+\sum \limits_{k\leq i}{\alpha_k}-\sum \limits_{k>i}{\alpha_k} \!\! \left. \rule[3mm]{0mm}{0mm} \right)} \nonumber \\
&\hspace{15mm} + \xi_M e^{jM(\alpha-\bar{\alpha})} - \xi_1 e^{-jM(\alpha-\bar{\alpha})},
\label{eq:Ftot21exp}
\end{align} 
where ($a$) follows by applying Euler's equation and ($b$) is obtained by re-arranging terms. 
We have the following results.
\begin{lemma}
Let $\lambda_{B_i} = \bar{\lambda}_B + \Delta_i$ and $\overline{\Delta^2} = \frac{1}{M}\sum_i\Delta_i^2$. If $|\Delta_i| \ll \bar{\lambda}_B$ for all $i$, then
\begin{equation}
\hspace{-32mm}
\text{\rm 1)}\quad\quad \bar{\alpha} = \frac{2\pi n_{\rm eff}\Delta z}{\bar{\lambda}_B} + {\cal O}\left(\frac{\overline{\Delta^2}}{\bar{\lambda}_B^3}\right).
\label{eq:lemma}
\end{equation}
If, in addition, $|\lambda-\bar\lambda_B|>\lambda_{\rm th}$ and $|\lambda_{B_i}-\lambda_{B_{i+1}}| \ll \lambda_{\rm th}$ for all $i$, then
\vspace{.3\baselineskip}
\[
\hspace{-38mm}
\text{\rm 2)}\quad\quad  \left|\xi_i - \xi_{i+1}\right| \ll |\xi_1|, |\xi_M|.
\]
\label{lem:approx}
\end{lemma}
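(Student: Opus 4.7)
For Part 1 of the lemma, my plan is to expand $\bar\alpha = M^{-1}\sum_i \alpha_i$ directly using $\alpha_i = 2\pi n_{\rm eff}\Delta z/\lambda_{B_i}$. Writing $\lambda_{B_i} = \bar\lambda_B(1 + \Delta_i/\bar\lambda_B)$, I would Taylor-expand $1/\lambda_{B_i}$ as a geometric series in the small parameter $\Delta_i/\bar\lambda_B$:
\[
\frac{1}{\lambda_{B_i}} = \frac{1}{\bar\lambda_B}\left(1 - \frac{\Delta_i}{\bar\lambda_B} + \frac{\Delta_i^2}{\bar\lambda_B^2} - \cdots\right).
\]
Summing over $i$ and dividing by $M$, the linear correction vanishes identically because $M^{-1}\sum_i \Delta_i = M^{-1}\sum_i(\lambda_{B_i} - \bar\lambda_B) = 0$ by the definition of $\bar\lambda_B$. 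The leading surviving correction is proportional to $\overline{\Delta^2}/\bar\lambda_B^2$, which, after the overall prefactor $2\pi n_{\rm eff}\Delta z/\bar\lambda_B$, produces exactly the claimed ${\cal O}(\overline{\Delta^2}/\bar\lambda_B^3)$ remainder.

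For Part 2, my plan is to combine $\xi_i - \xi_{i+1}$ over a common denominator. Using the standing assumption $\kappa_i\equiv\kappa$,
\[
\xi_i - \xi_{i+1} = \frac{\kappa\Delta z}{2j}\cdot\frac{\alpha_i - \alpha_{i+1}}{(\alpha - \alpha_i)(\alpha - \alpha_{i+1})},
\]
so that
\[
\frac{|\xi_i - \xi_{i+1}|}{|\xi_1|} = \frac{|\alpha_i - \alpha_{i+1}|\cdot|\alpha - \alpha_1|}{|\alpha-\alpha_i|\cdot|\alpha-\alpha_{i+1}|}.
\]
I would then translate the wavelength-domain hypotheses into $\alpha$-domain ones using $|\alpha_p - \alpha_q| = 2\pi n_{\rm eff}\Delta z\,|\lambda_{B_p} - \lambda_{B_q}|/(\lambda_{B_p}\lambda_{B_q}) \approx (2\pi n_{\rm eff}\Delta z/\bar\lambda_B^2)\,|\lambda_{B_p} - \lambda_{B_q}|$. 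Under this correspondence, $|\lambda_{B_i}-\lambda_{B_{i+1}}|\ll\lambda_{\rm th}$ becomes $|\alpha_i-\alpha_{i+1}|\ll\alpha_{\rm th}$, while the side-lobe hypothesis $|\lambda-\bar\lambda_B|>\lambda_{\rm th}$ (together with the remark just below (\ref{eq:Rsimp})) yields $|\alpha-\alpha_j|\gtrsim\alpha_{\rm th}$ for every $j$. Substituting these bounds makes the numerator negligible relative to the denominator, giving the claimed $\ll$-inequality; the comparison with $|\xi_M|$ is identical.

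The step I expect to be most delicate is justifying uniformly over $j\in\{1,\ldots,M\}$ that $|\alpha-\alpha_j|$ stays of the same order as $|\alpha-\bar\alpha|$. The hypotheses only control adjacent differences $|\lambda_{B_i}-\lambda_{B_{i+1}}|$, and telescoping these bounds the total spread of the $\lambda_{B_j}$ by at most $(M-1)\max_i|\lambda_{B_i}-\lambda_{B_{i+1}}|$. For this cumulative drift not to approach $\lambda_{\rm th}$, one is tacitly assuming a smooth, physically plausible strain profile, which is consistent with the informal $\ll$-notation employed throughout the paper. The remaining work is routine algebra together with the side-lobe identification already established around (\ref{eq:Rsimp}).
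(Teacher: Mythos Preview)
Your proposal is correct and follows essentially the same route as the paper. Part~1 is identical to the paper's argument. For Part~2 the paper computes the same ratio $|\xi_i-\xi_{i+1}|/|\xi_k|$ but carries out the bound directly in the wavelength domain via $(\alpha-\alpha_i)^{-1}\propto \lambda\lambda_i/(\lambda_i-\lambda)$ rather than staying in $\alpha$-coordinates; the algebra is equivalent. The delicate uniform lower bound you identify is handled in the paper by introducing $\Delta_{\max}=\max_i|\Delta_i|$ and writing $\Delta\lambda-\Delta_{\max}<|\lambda_i-\lambda|<\Delta\lambda+\Delta_{\max}$, which tacitly requires $\Delta_{\max}<\lambda_{\rm th}$---exactly the smoothness caveat you anticipated.
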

\begin{proof}
See Appendix A.
\end{proof}

Assuming that the conditions of Lemma~\ref{lem:approx} are satisfied, we conclude that
\begin{equation}
F_{21} \approx \xi_M e^{jM(\alpha-\bar{\alpha})} - \xi_1 e^{-jM(\alpha-\bar{\alpha})},
\label{eq:F21}
\end{equation}
so that
\begin{align}
R(\lambda)  &\stackrel{(a)}{\approx} 4{\rm Re}(\xi_1\xi^*_M) \sin^2\left(M(\alpha-\bar{\alpha})\right) + (\xi_M - \xi_1)^2 \nonumber \\
&= \frac{ (\kappa \Delta z)^2}{(\alpha-\alpha_1)(\alpha-\alpha_M)} \sin^2\left(M(\alpha-\bar{\alpha})\right) + (\xi_M - \xi_1)^2  \rule[6mm]{0mm}{0mm}\nonumber \\
&\stackrel{(b)}{\approx} (\kappa L)^2 \sinc^2\left(M(\alpha-\bar{\alpha})\right) + (\xi_M - \xi_1)^2, \rule[6mm]{0mm}{0mm}
\label{eq:sincapprox}
\end{align}
where ($a$) follows from \eqref{eq:F21} using elementary trigonometric identities and ($b$) follows from the presumption that small variations in the amplitude of the $\sin$ function in (\ref{eq:sincapprox}), caused by replacing the $\alpha_1$ and $\alpha_M$ terms by $\bar{\alpha}$ are negligible. Also note that $M\Delta z = L$, the length of the sensor. 

Some remarks are in place here. The assumptions for which the results of Lemma~\ref{lem:approx} hold are met in most practical scenarios. Indeed, in practice the deviations from the mean Bragg wavelength is less than a few nanometers. As an example, the strain distribution as depicted in Fig.~\ref{fig:dists1} gives rise to a maximum deviation of about 250 pm, which is, compared to the average Bragg wavelength ($\bar{\lambda}_B \approx 1550$ nm), three to four orders of magnitude smaller. Larger deviations are unrealistic in the sense that too large $\Delta_i$ will result in breaking the FBG sensor or require unrealistically long sensors. 
In addition, if we choose $\Delta z$ sufficiently small and assume that the strain distribution cannot change arbitrarily fast  along the length of the sensor, we will have $|\lambda_{B_i}-\lambda_{B_{i+1}}| \ll \lambda_{\rm th}$, where $\lambda_{\rm th}$ is in the order of 1-2 nm. By inspection of Fig.~\ref{fig:dists1}, if we choose $M=100$, the maximum difference between successive Bragg wavelengths $\lambda_{B_i}$ is approximately 5 pm, which is three orders of magnitude smaller than $\lambda_{\rm th}$. 

Coming back to the approximation \eqref{eq:sincapprox}, we see that the reflected spectrum can be approximated by a scaled (squared) sinc function having a possible offset. This approximation only holds in the wavelength range $|\lambda-\lambda_B|>\lambda_{\rm th}$.
To illustrate this approximation, 
Fig.~\ref{fig:sidelobes} compares the reflected spectrum of a simulated FBG sensor under the non-uniform strain field depicted in 
Fig.~\ref{fig:dists1}, along with its  approximation given by \eqref{eq:sincapprox}. It can be seen that this approximation does not hold for the main lobe, but does hold for the wavelength region for which $|\lambda-\lambda_B| > \lambda_{\rm th}$, where $\lambda_{\rm th}$ is in the order of 1-2 nm  which can be identified by setting a threshold level on the amplitude of the reflected spectra\footnote{ Here the threshold level was set at 1 percent of the peak amplitude ($20 \,dB$ difference in the amplitude).}.

\color{black}
\begin{figure}[t]
\center
\includegraphics[width=0.52\textwidth]{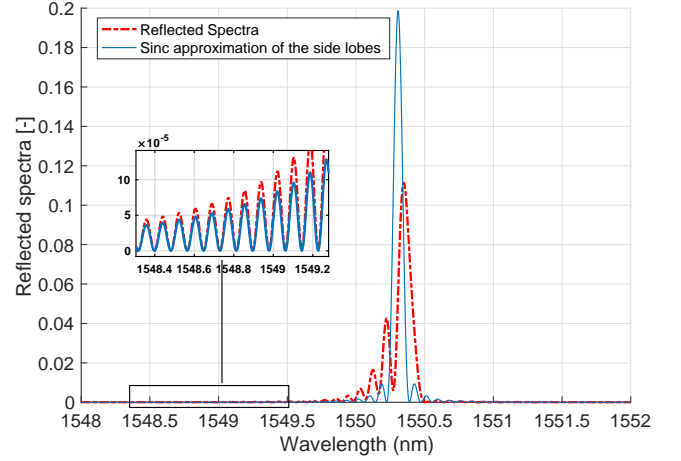}
\caption{FBG reflected spectra under an arbitrary non-uniform strain field (in blue) and the approximation for the side lobes in red.}\label{fig:sidelobes}
\end{figure}
%

As was shown in Section~\ref{sec:tmm}, it is $\bar{\lambda}_B$ that we need to estimate in order to compute the mean strain distribution $\bar{s}$
using  \eqref{eq:smean}. However, the result
of Lemma~\ref{lem:approx}, which gives the relation between  $\bar{\lambda}_B$ and $\bar{\alpha}$, shows that this can be accomplished by estimating $\bar{\alpha}$, which is the phase shift of the sinc approximation
of the reflected spectrum. 
A maximum likelihood estimator of 
$\bar{\alpha}$ is therefore given by the phase-shift value that maximises the cross-correlation between the observed reflected spectrum and  $\sinc^2\left(M(\alpha-\bar{\alpha})\right)$ \cite[p.\ 192]{kay:93}, where the correlation in this case is only taken over the range $|\lambda-\lambda_B| > \lambda_{\rm th}$. 
Note that since the cross-correlation is shift and scale invariant, neither any discrepancy between the magnitude of the reflected spectrum and its sinc approximation, as is present in the example shown in Fig.~\ref{fig:sidelobes}, nor a possible offset in the spectrum will affect the estimation of $\bar{\alpha}$. In addition, even when the first $M-1$ exponential terms in \eqref{eq:Ftot21exp} cannot be completely neglected, making the approximation \eqref{eq:F21} less accurate,  this will not have a significant impact on the estimation of  $\bar{\alpha}$ since the exponentials give low correlation with the sinc function (they have different oscillating frequencies), making the proposed method robust against inaccuracies in \eqref{eq:sincapprox}.  Also, this method is not subject to any additional spectral noise or errors compared with conventional FBG interrogation methods, and due to the filtering properties of the correlation function, it is robust against amplitude noise as well. It is worth mentioning that the proposed mean strain estimation method uses the information in the side-lobes of the reflected spectra, therefore, FBG sensors whose side-lobes are suppressed and have really small amplitudes (such as Gaussian or raised-cosine apodized FBG sensors) will not perform well with our methods.

\color{black}
\subsection{Practical considerations}
In practical scenarios the output of the FBG sensor is obtained  using an interrogator. As a consequence, the data available for processing are  samples of the reflected spectra, uniformly spaced in the $\lambda$-domain. Instead of performing the processing in the $\alpha$-domain, 
which would require a non-uniform re-sampling of the data, 
we could equally well perform the correlation in the wavelength domain directly. To see this, let 
$\lambda  = \bar\lambda_B + \Delta\lambda > \lambda_{\rm th}$, and assume that $|\Delta\lambda| \ll \bar\lambda_B$. This assumption is generally met in practice since $|\Delta\lambda|$ is in the order of a few nanometers, which is three orders of magnitude smaller than $\bar\lambda_B$. Moreover, let $\rho = 2\pi n_{\rm eff}\Delta z$. As a consequence, the $\alpha$ as defined in (\ref{eq:alphas}), can be rewritten as
\begin{align*}
\alpha &= \frac{\rho}{\bar\lambda_B + \Delta\lambda} \nonumber \\
&\stackrel{(a)}{\approx}  \frac{\rho}{\bar\lambda_B}\left(1 - \frac{\Delta\lambda}{\bar\lambda_B} \right) \nonumber \\
&\stackrel{(b)}{\approx} \bar\alpha -  \frac{\rho}{\bar\lambda_B^2} \Delta\lambda,
\end{align*}
where ($a$) is a first-order Taylor series approximation of $\alpha$ and ($b$) follows from \eqref{eq:lemma}. 
Hence, a linear change of the wavelength manifests itself as a linear change in $\alpha$, and as a consequence, uniform sampling of $\lambda$ will result in uniform sampling of $\alpha$ and vice versa, assuming $|\Delta\lambda| \ll \bar\lambda_B$. 

Although the above introduced method for retrieving $\bar\lambda$ will work for computer simulations, due to the presence of birefringence effects and other unwanted artefacts on real FBG measurements like non-longitudinal strains, the algorithm might lead to unwanted maxima in the cross-correlation function and therefore to incorrect phase retrieval. To overcome this problem in practical scenarios, we introduce a slight modification to the above proposed algorithm. Instead of computing the cross-correlation between the reflected spectrum and the sinc function directly,   we first align the two reflected spectra (recorded before and after applying the strain) based on their centre of mass~\cite{Askins1995}
\begin{equation}
\lambda_{B_c} = \frac{\int_{\mathbold{\lambda}} \lambda R(\lambda) d\lambda}{\int_{\mathbold{\lambda}}{R(\lambda)d\lambda}}
 \label{eq:centroid}
\end{equation}
where $\mathbold{\lambda}$ is the wavelength region that covers the reflection spectrum. This shift, $\Delta \lambda_{B_c}$, can be used to find a rough estimate of
$\bar{\lambda}$, say $\tilde{\lambda} = \lambda_B + \Delta\lambda_{B_c}$. 
\color{black}
After this, we calculate the final phase shift by maximising the cross-correlation of the side lobes of both observations over a small interval around 
$\tilde{\lambda}$, resulting in an additional phase shift
$\delta\lambda_B$. Experiments have shown that this modification results in more robust mean strain estimates and is illustrated in Fig.~\ref{fig:ps_out1} and \ref{fig:ps_out2}.
The modified algorithm is summarised in Algorithm \ref{alg:strain}. Note that, as mentioned in the introduction, existing algorithms for estimating the mean strain value are  based on the shift of the peak wavelength. That is, they estimate the mean strain value based on $\Delta\lambda_B$. In that sense, we could interpret $\delta\lambda_B$ as an error compensating term for methods based on peak wavelength alignment.

\begin{figure*}[h]
\begin{subfigure}{.5\textwidth}
  \centering
  \includegraphics[width=1.0\textwidth]{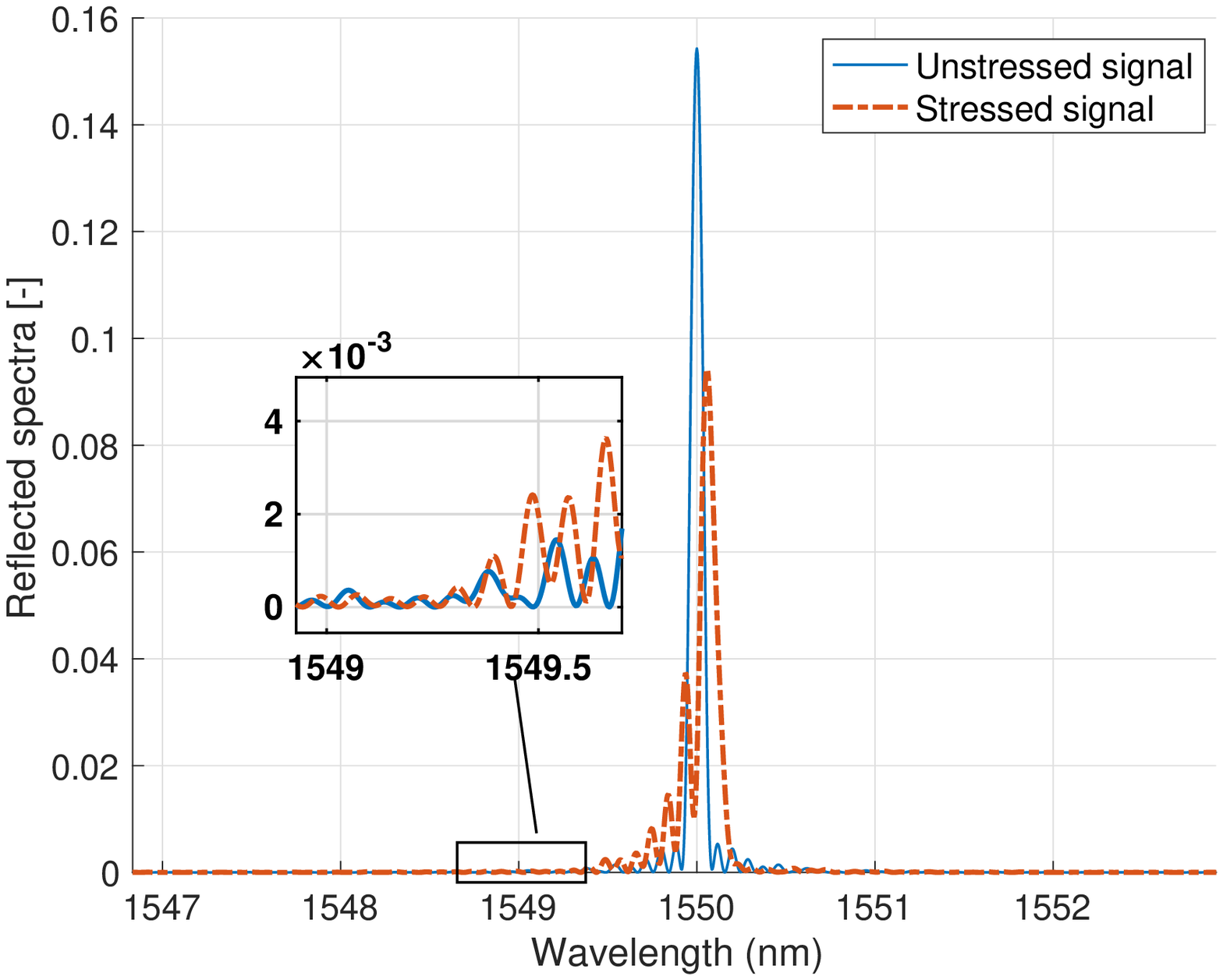}
  \caption{}
  \label{fig:ps_out1}
\end{subfigure}%
\begin{subfigure}{.5\textwidth}
  \centering
  \includegraphics[width=1.0\textwidth]{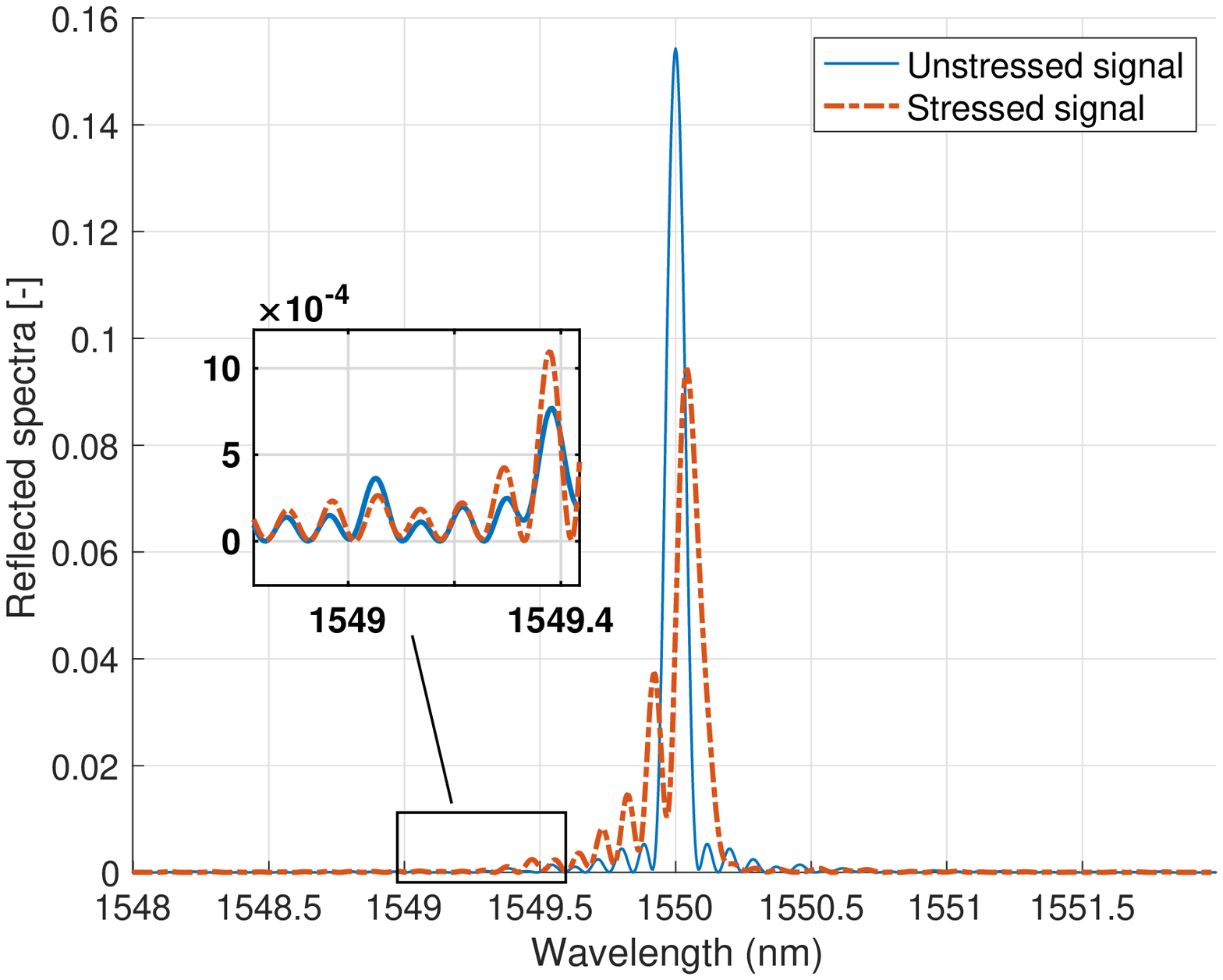}
  \caption{}
  \label{fig:ps_out2}
\end{subfigure}
\caption{The FBG reflected spectra of the unstrained sensor (blue) and the strained sensor (red) where (a) shows both spectra aligned with respect to their centre of mass and (b) with respect to maximising the cross-correlation of the side peaks.}
\label{fig:ps_out}
\end{figure*}
\begin{algorithm}[t]
\caption{Mean strain estimation \rule[4mm]{0mm}{0mm}\rule[-2mm]{0mm}{0mm}}\label{alg:strain} 
\begin{algorithmic}[1]
\State \rule[4mm]{0mm}{0mm}Align the  centre of mass of the reflected spectra of the FBG sensor measured with and without applying a (non-uniform) strain, thereby defining $\Delta\lambda_{B_c}$.  
\color{black}
\State \rule[5mm]{0mm}{0mm}Maximise the cross-correlation of the side lobes of both measurements over a small interval  with a typical value of $1 \, \textrm{nm}$ around \color{black} $\tilde{\lambda} = \lambda_B + \Delta\lambda_{B_c}$, resulting in an additional phase shift $\delta\lambda_{B}$.
\State \rule[5mm]{0mm}{0mm}Calulate the required phase shift $\bar{\lambda}_B -  \lambda_B = \Delta\lambda_{B_c} + \delta\lambda_B$.
\State \rule[5mm]{0mm}{0mm}Calculate the mean strain using (\ref{eq:smean}).\rule[-2mm]{0mm}{0mm}
\end{algorithmic}
\end{algorithm}

\section{Experimental Results}
In this section, we will present experimental results for non-uniform strain distributions on FBG sensors, obtained by both computer simulations and experimental FBG measurements.

In the first experiment, which is a computer simulation, the strain distribution of Fig.~\ref{fig:dists1} is applied over the length of the sensor, resulting in the FBG reflected spectrum of Fig. \ref{fig:dists2}, where we chose $M=500$.  In order to create a more realistic experiment, the sensor was assumed to have random fluctuations on the magnitude of the refractive index change, which results in random variations of the coupling coefficient along the length of the sensor, and eventually, in emergence of unwanted lower frequency harmonics on the side-lobes of the FBG reflection spectra.
Following the steps outlined in Algorithm~\ref{alg:strain}, we first align the reflected spectra with respect to their centre of mass, as shown in Fig.~\ref{fig:ps_out1}, and measure the shift $\Delta \lambda_{B_c}$. In the example at hand, this shift is given by $\Delta \lambda_{B_c}=296 \, \textrm{pm}$. The next step is then to shift the reflected spectrum such that the cross-correlation of the side lobes is maximised, as shown in Fig.~\ref{fig:ps_out2},
resulting in an additional phase shift of $\delta\lambda_B = +12 \, \rm{pm}$. After computing the final phase shift as $ \Delta\lambda_{B_c} + \delta\lambda_B = 308 \, \rm{pm}$ (step 3 of Algorithm~\ref{alg:strain}), the mean strain is computed using \eqref{eq:smean}, resulting in  $\bar{s} = 254.7 \,\mu \varepsilon$. It can be seen that the resulting estimation for the mean strain over the length of the sensor is quite close to the actual mean strain ($253.5 \,\mu \varepsilon$), where the proposed algorithm compensates for an error of around $30 \,\mu \varepsilon$ as compared to traditional strain measurement algorithms, see Section~\ref{meanerrorcomp}.

\color{black}
%
 
\noindent
 In the second experiment, which is another computer simulation, we investigate the performance of the algorithm when a non-uniform strain distribution is applied on a linearly chirped fibre Bragg grating (LCFBG) sensor. The designed sensor, again containing refractive index fluctuations, has a grating distribution with $\frac{d\lambda_B}{dz}=2.5 \, \textrm{nm/cm}$, and a length of $1 \,\textrm{cm}$. The strain distribution applied over the length of the sensor is shown in Fig.~\ref{fig:poly_strain}. The mean strain value in this experiment is $93 \, \mu\varepsilon$, but the traditional strain estimation algorithm in LCFBGs which is solely based on the shift of the centre of mass of the stressed and unstressed signals, show a wavelength shift of around $42.6 \, \textrm{pm}$ or a strain estimation of $35\, \mu\varepsilon$. Using our algorithm, the compensating shift that results from maximising the cross correlation of the side-lobes was $\delta \lambda_B= +72 \,\textrm {pm}$, which accounts for a mean strain error of $59.5 \, \mu\varepsilon$. The final mean strain estimation based on our algorithm is therefore, $94.5 \, \mu\varepsilon$. Fig.~\ref{fig:chirped_fbg} shows the reflected spectra of the stressed and unstressed LCFBG after the compensating shift. Note the synchronised side-lobes of the two reflected spectra.

\begin{figure*}[t]
\begin{subfigure}{.5\textwidth}
  \centering
  \includegraphics[width=0.97\textwidth]{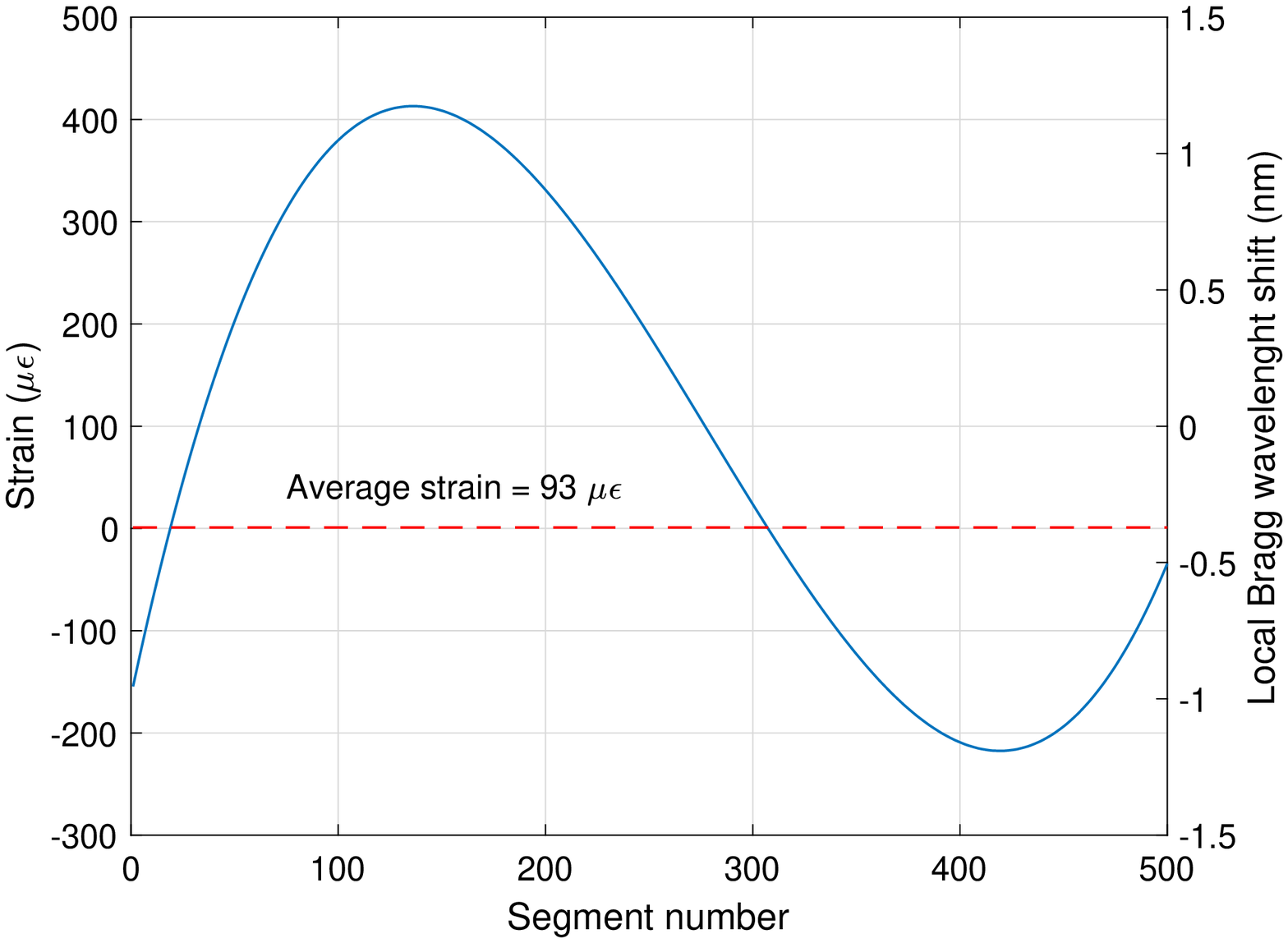}
  \caption{}
  \label{fig:poly_strain}
\end{subfigure}%
\begin{subfigure}{.5\textwidth}
  \centering
  \includegraphics[width=0.97\textwidth]{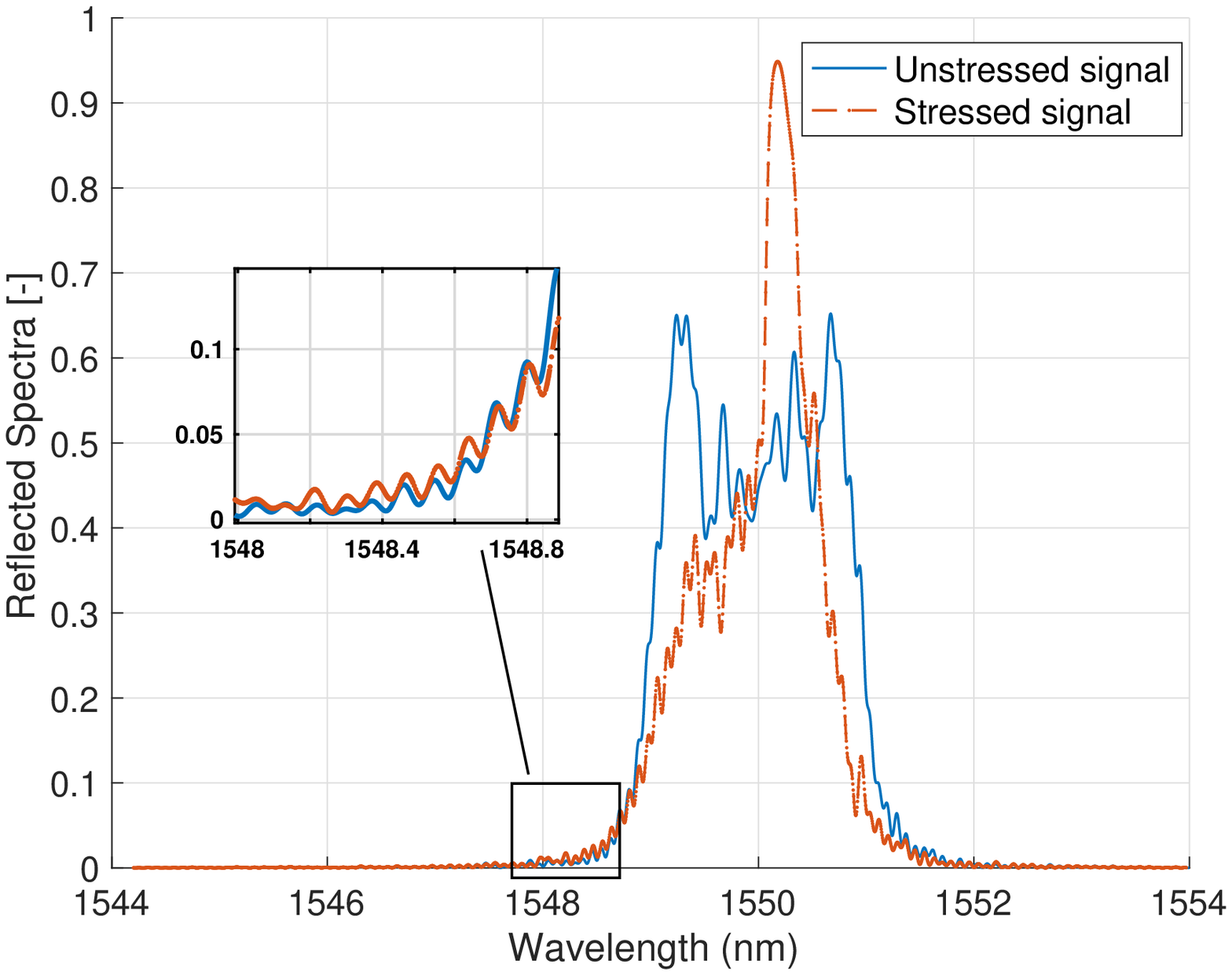}
  \caption{}
  \label{fig:chirped_fbg}
\end{subfigure}
\caption{(a): A polynomial type strain distribution, applied over the length of the LCFBG sensor. (b): The unstressed and stressed LCFBG reflected spectra are aligned with respect to maximising the correlation of their side peaks.}
\label{fig:chirping}
\end{figure*}

\color{black}
In the third experiment, the proposed method will be evaluated using experimental FBG measurements in a controlled laboratory environment. 
The FBG sensors used in this study are the LBL-1550-125 draw tower grating (DTG) type sensors (FBGS International NV). The length of the sensors are $10 \, \rm{mm}$, with a nominal Bragg wavelength of $1550.08 \, \rm{nm} $    and maximum reflectivity level of about $10 \%$. Although the algorithm would work at its best in sensors with higher reflectivity levels, we have two reasons for choosing DTG sensors with such low reflectivity levels in our experiments. The first reason is that DTG sensors have a much higher tensile strength compared to FBG sensors that are produced using traditional strip and recoating methods. They can easily endure static strains of more than $10000 \,\mu\varepsilon$ which can commonly occur in practical applications. The second reason is to show that our algorithm can equally well perform, even at such low reflectivity levels, provided that the sensor is interrogated with a high dynamic range interrogator.
\begin{figure*}[t]
\begin{subfigure}{.5\textwidth}
  \centering
  \vspace{8mm}
  \includegraphics[width=1\textwidth]{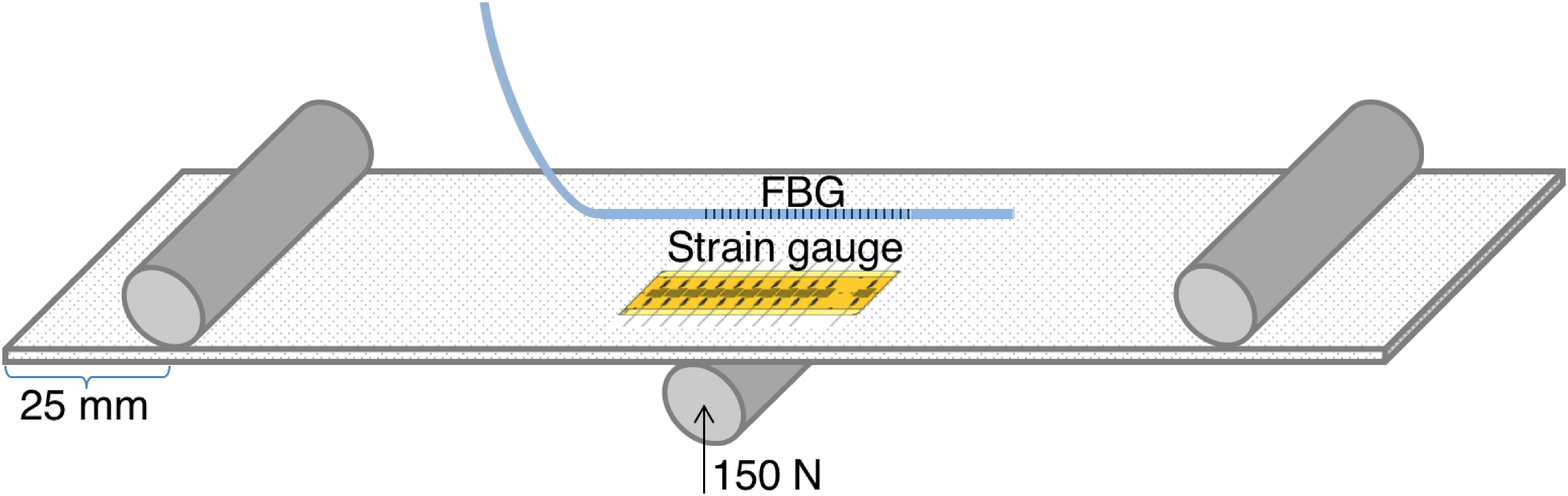}
  \caption{}
  \label{fig:sdist1}
\end{subfigure}%
\begin{subfigure}{.5\textwidth}
  \centering
  \hspace{1mm}
  \includegraphics[width=.8\textwidth]{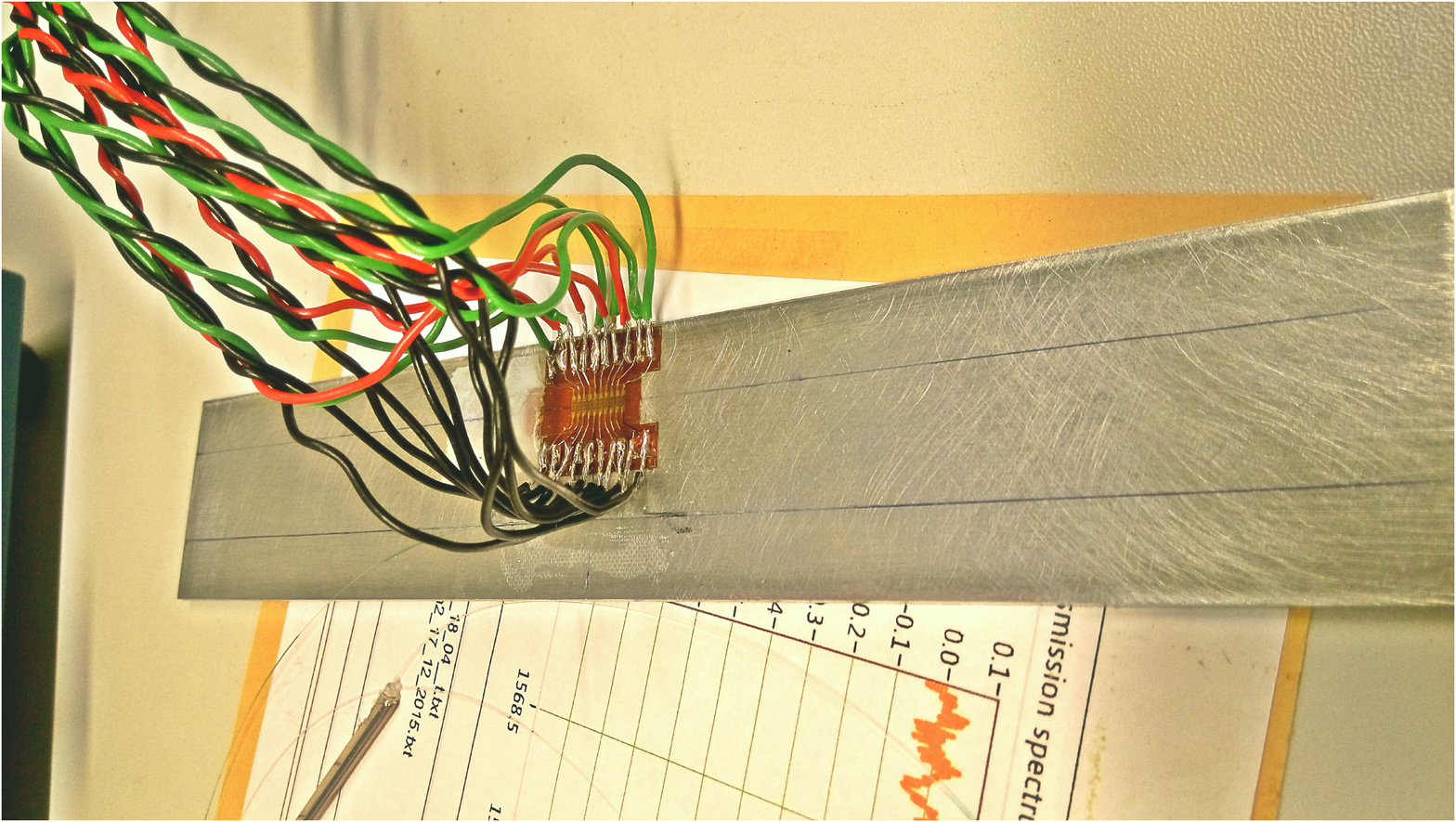}
  \caption{}
  \label{fig:sdist2}
\end{subfigure}
\begin{subfigure}{.5\textwidth}
  \centering
  \vspace{5mm}
  \includegraphics[width=.8\textwidth]{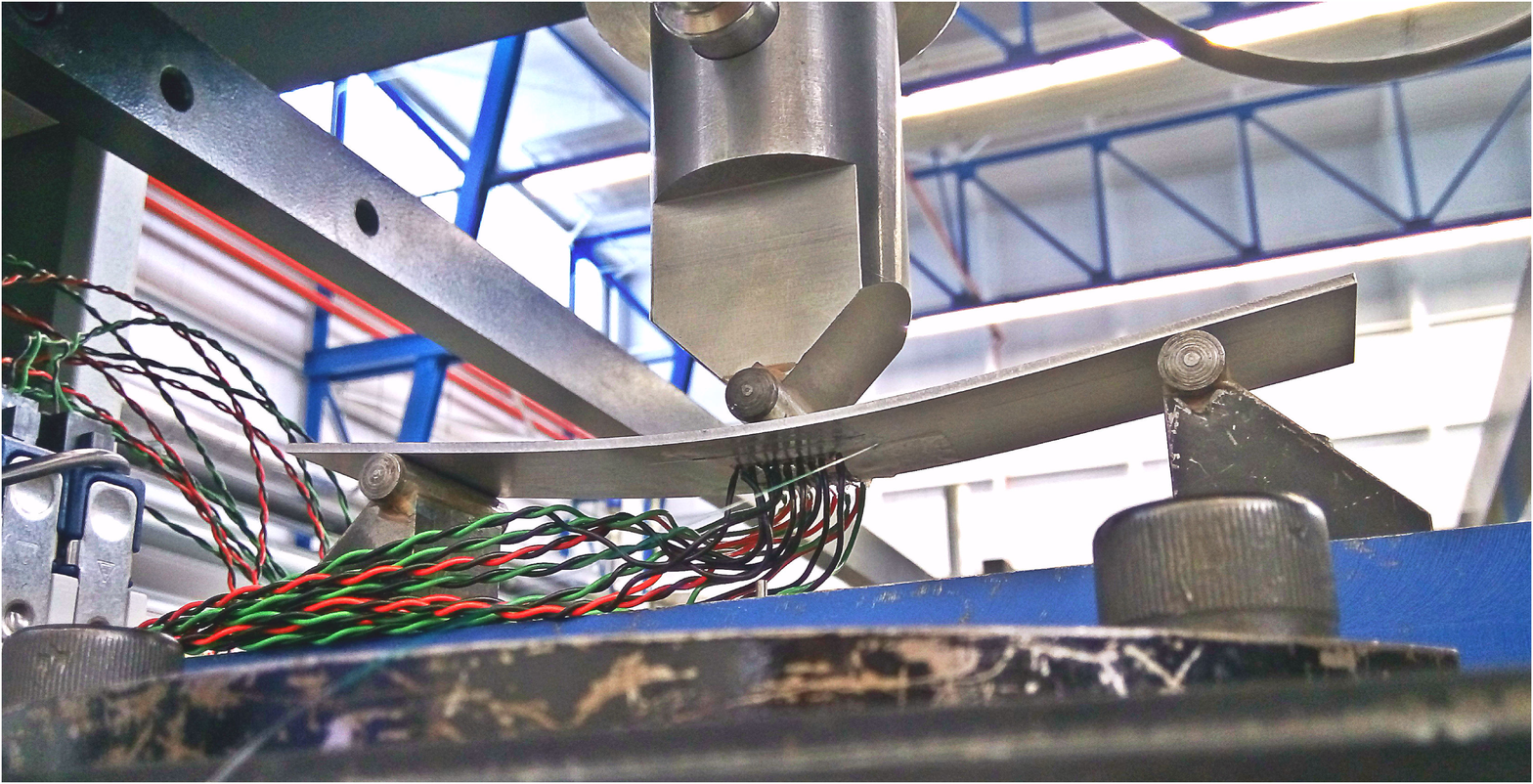}
  \caption{}
  \label{fig:sdist2}
\end{subfigure}
\begin{subfigure}{.5\textwidth}
  \centering
    \vspace{5mm}
  \includegraphics[width=.8\textwidth]{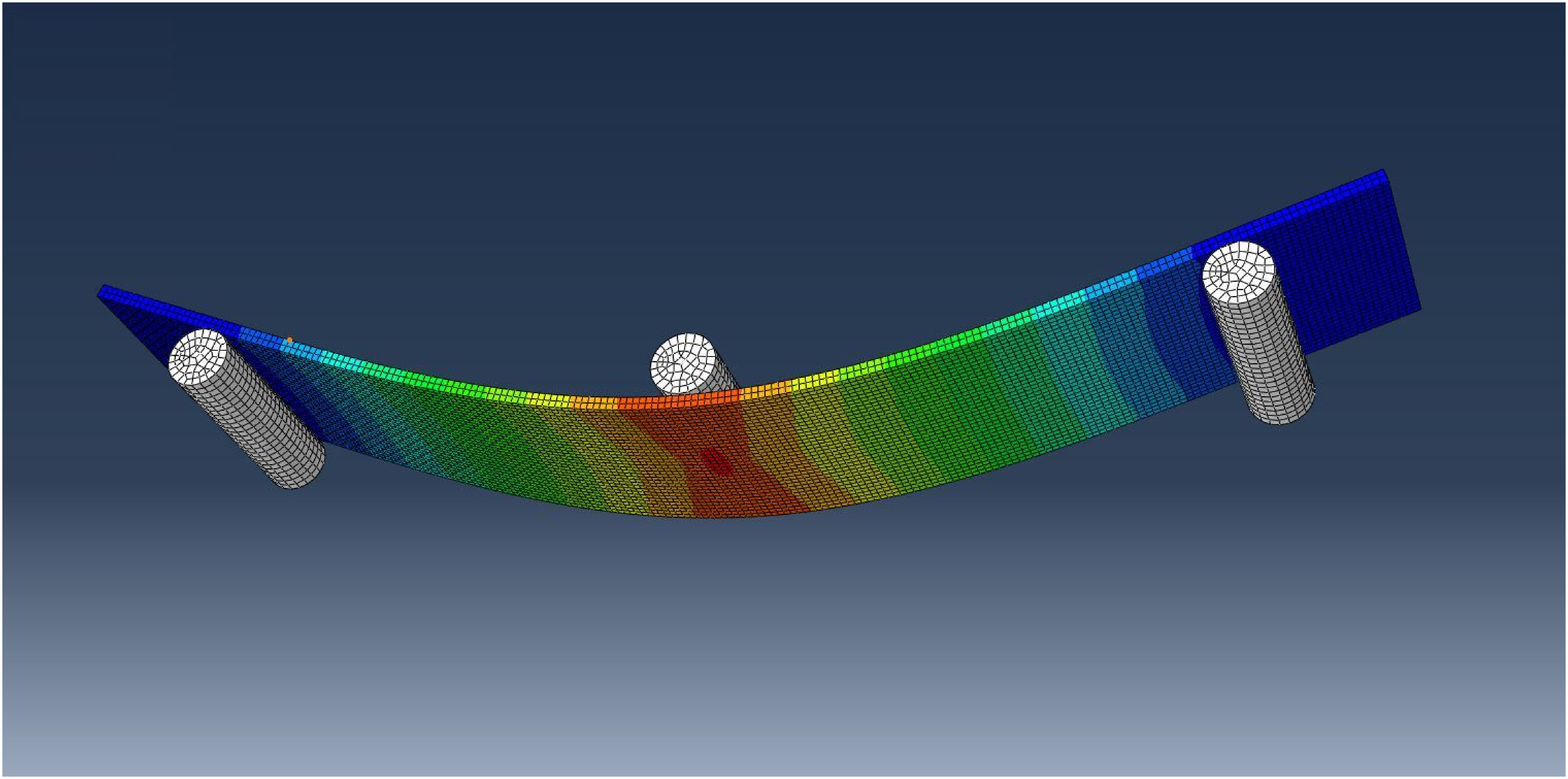}
  \caption{}
  \label{fig:sdist2}
\end{subfigure}
\caption{(a): A schematic design of the specimen and the location of the sensors. (b): The test specimen (bottom view) with the surface mounted strain gauge and the FBG. (c): The test specimen under load, while the signals were being recorded. (d): Finite element modelling of the specimen under a similar load. }
\label{fig:exp_setup}
\end{figure*}
\begin{figure*}[t]
\begin{subfigure}{.5\textwidth}
  \centering
  \includegraphics[width=0.97\textwidth]{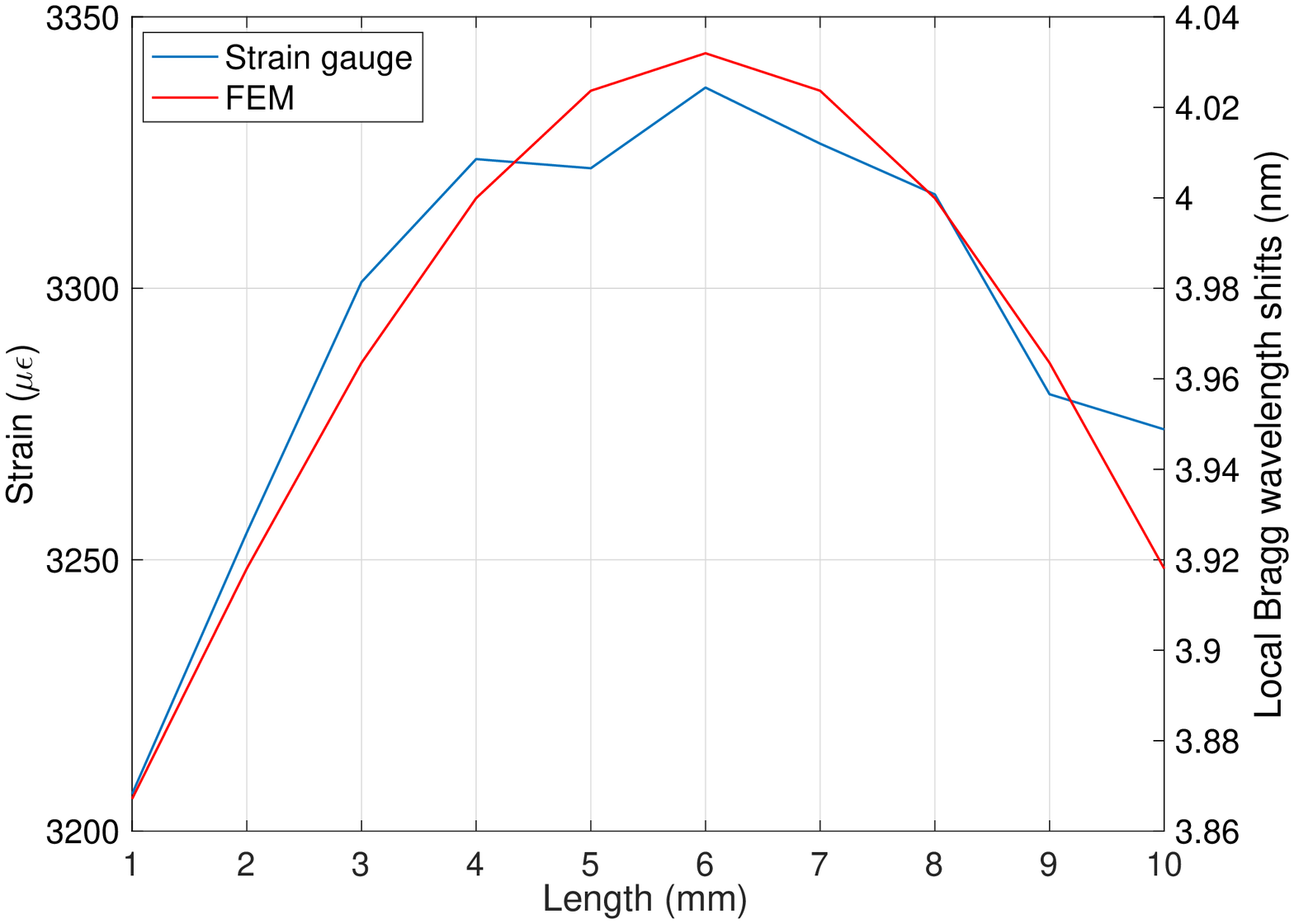}
  \caption{}
  \label{fig:gaugedist1}
\end{subfigure}%
\begin{subfigure}{.5\textwidth}
  \centering
  \includegraphics[width=0.97\textwidth]{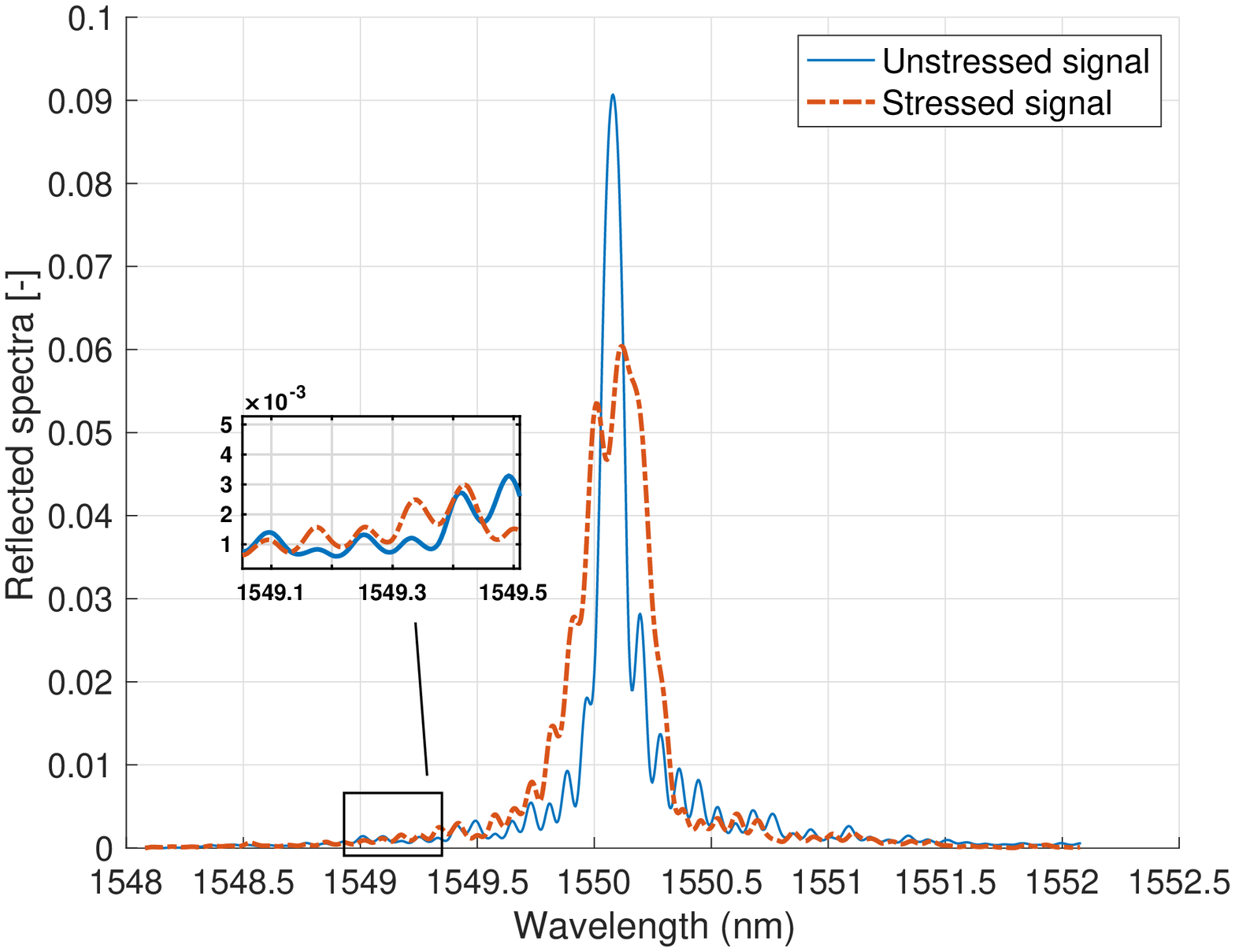}
  \caption{}
  \label{fig:gaugedist2}
\end{subfigure}
\caption{(a): The strain distributions recorded from the strain gauges (in blue) and calculated with FEM (in red). (b): The primary and secondary FBG reflected spectra are aligned with respect to maximising the correlation of their side peaks.}
\label{fig:exp_res}
\end{figure*}

 \color{black} Since the information in the side lobes of the reflected spectra plays a significant role in this study, a high dynamic range interrogator based on a fibre Fabry-Perot tunable filter (FFP-TF) from Micro Optics was used to record the output of the sensors. The PXIe-4844 FBG interrogator from National Instruments, which has a dynamic range of $40 \, \rm{dB}$ and a wavelength accuracy of $4 \, \rm{pm}$, was used to interrogate the sensors. The wavelength range for this device ranges from $1510 \, \rm{nm}$ to $1590 \, \rm{nm}$. 

In order to validate the results, we used both electrical strain gauge measurements and results obtained by finite element modelling (FEM) as reference strain measurements. For the strain gauge measurements, an array of $10$ miniature strain gauges with a pitch of $1 \, \rm{mm}$ (HBM 1-KY11-1/120), spatial resolution of $1\,\rm{mm}$, and nominal strain accuracy of less than $5 \, \mu \varepsilon$ was used. The data acquisition of the analogue output of the strain gauges was performed using the NI-9219 universal analogue input modules from National Instruments.
The data acquisition was carried out using the National Instruments LabVIEW software, and the signal processing and conditioning was performed in MATLAB R2016b. 
To avoid unwanted artefacts and complications associated with finite element modelling, the test setup was designed as simply as possible by using specimens with isotropic properties and subjecting them to a static three point flexural test. The FBG sensor was surface mounted to a piece of 6082 aluminium alloy with dimensions $200 \,\rm{mm} \times 35 \, \rm{mm} \times 2 \, \rm{mm}$, and the strain gauge array was pasted on the specimen in a symmetrical position with respect to the FBG. Fig.~\ref{fig:exp_setup}a and \ref{fig:exp_setup}b depict the configuration of this setup. To induce the desired strain field, the specimen was placed in a $10 \, \rm{kN}$ tensile machine, with a force of $150 \, \rm{N}$ applied on the loading pin (see Fig.~\ref{fig:exp_setup}c). The finite element modelling  of the specimen under such force is shown in Fig.~\ref{fig:exp_setup}d in which the element size was set to $1\, \rm{mm}$, equal to the spatial resolution of the strain gauges. Note that both sensors are adhered to the specimen on the opposite to the contact surface of the loading pin.

Based on the FEM analysis and the recordings of the strain gauges, the strain distributions shown in Fig.~\ref{fig:exp_res}a were obtained. The mean strain value of the FEM was $3291 \, \mu \varepsilon$, whereas the recordings of the electrical strain gauges showed a mean strain value of $3294 \, \mu \varepsilon$. 
In order to estimate the mean strain with the proposed method, we again follow the steps outlined in  Algorithm~\ref{alg:strain}. 
That is, we first align the centre of masses of the reflected spectra of the unstrained and strained sensor, resulting in a wavelength shift of $\Delta\lambda_{B_c} = 3.932 \, {\rm nm}$, after which we maximise the cross-correlation of the side lobes, as shown in Fig.~\ref{fig:gaugedist2}, resulting in an additional wavelength shift of $\delta\lambda_B =+56 \, {\rm pm}$. With this, the resulting mean strain estimate becomes  $\bar{s} = (3932+56)/k_s=3298 \, \mu \varepsilon$. Note that classical methods would estimate the mean strain value from \eqref{eq:ks} resulting in $\bar{s} = \Delta\lambda_B/k_s = 3328  \, \mu \varepsilon$. Furthermore, we applied different force loads on the loading pin of the same experimental setup, in which we started from $10 \,\text{N}$ and then linearly increased the force to $150 \, \text{N}$ with increments of $10 \,\text{N}$. The results of the compensated mean strain estimation using our algorithm, and the mean strain values based on the shift of the Bragg wavelength are presented in Fig.~\ref{fig:more_exp}, and the compensated mean strain values are in accordance with the data from the electrical strain gauges. It is noteworthy that as the applied force increases, the error associated with the traditional strain estimation method increases too, which is due to further deviating from a uniform strain field. 
 
\begin{figure*}[t]
\begin{subfigure}{.5\textwidth}
  \centering
  \includegraphics[width=0.97\textwidth]{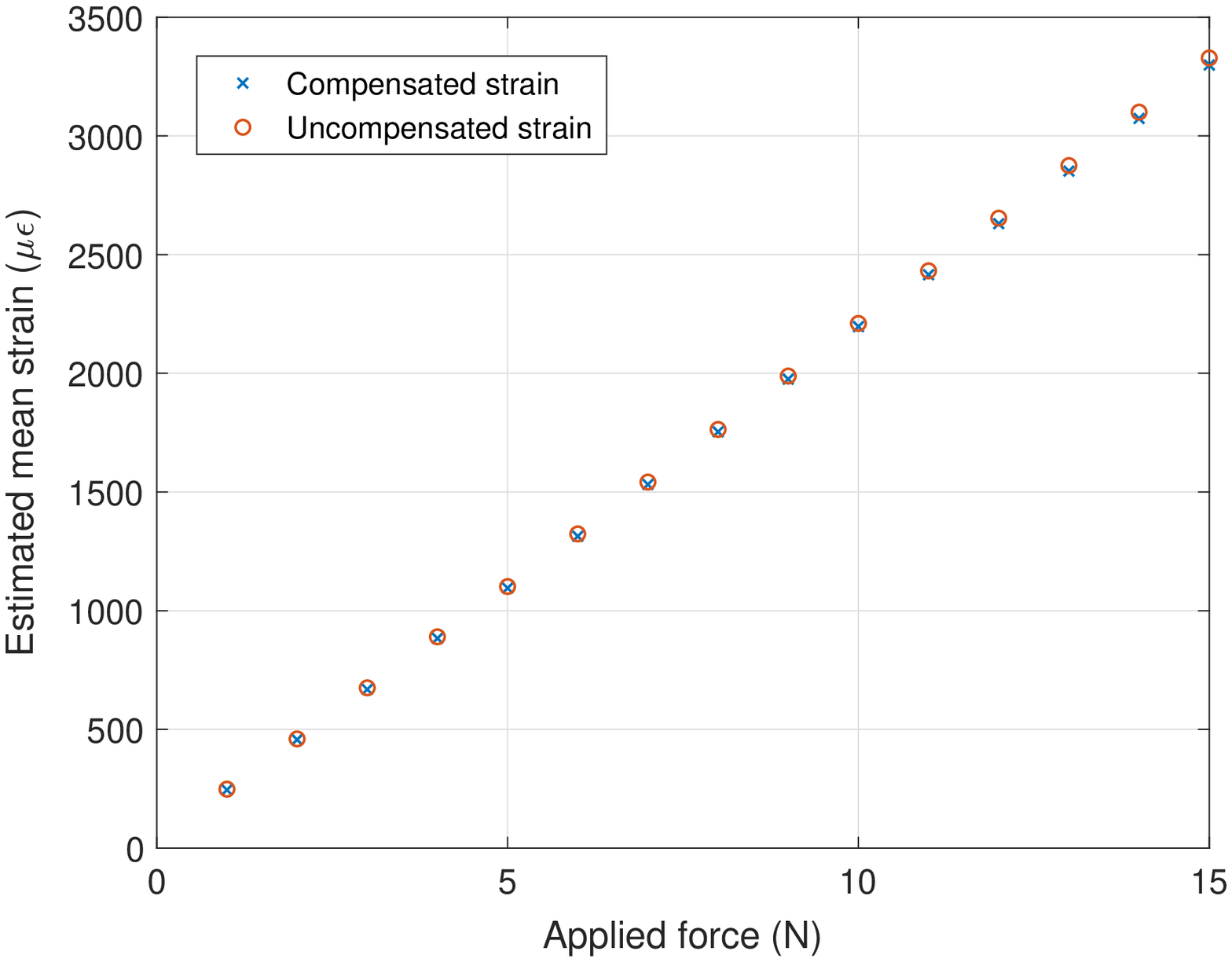}
  \caption{}
  \label{fig:more1}
\end{subfigure}%
\begin{subfigure}{.5\textwidth}
  \centering
  \includegraphics[width=0.97\textwidth]{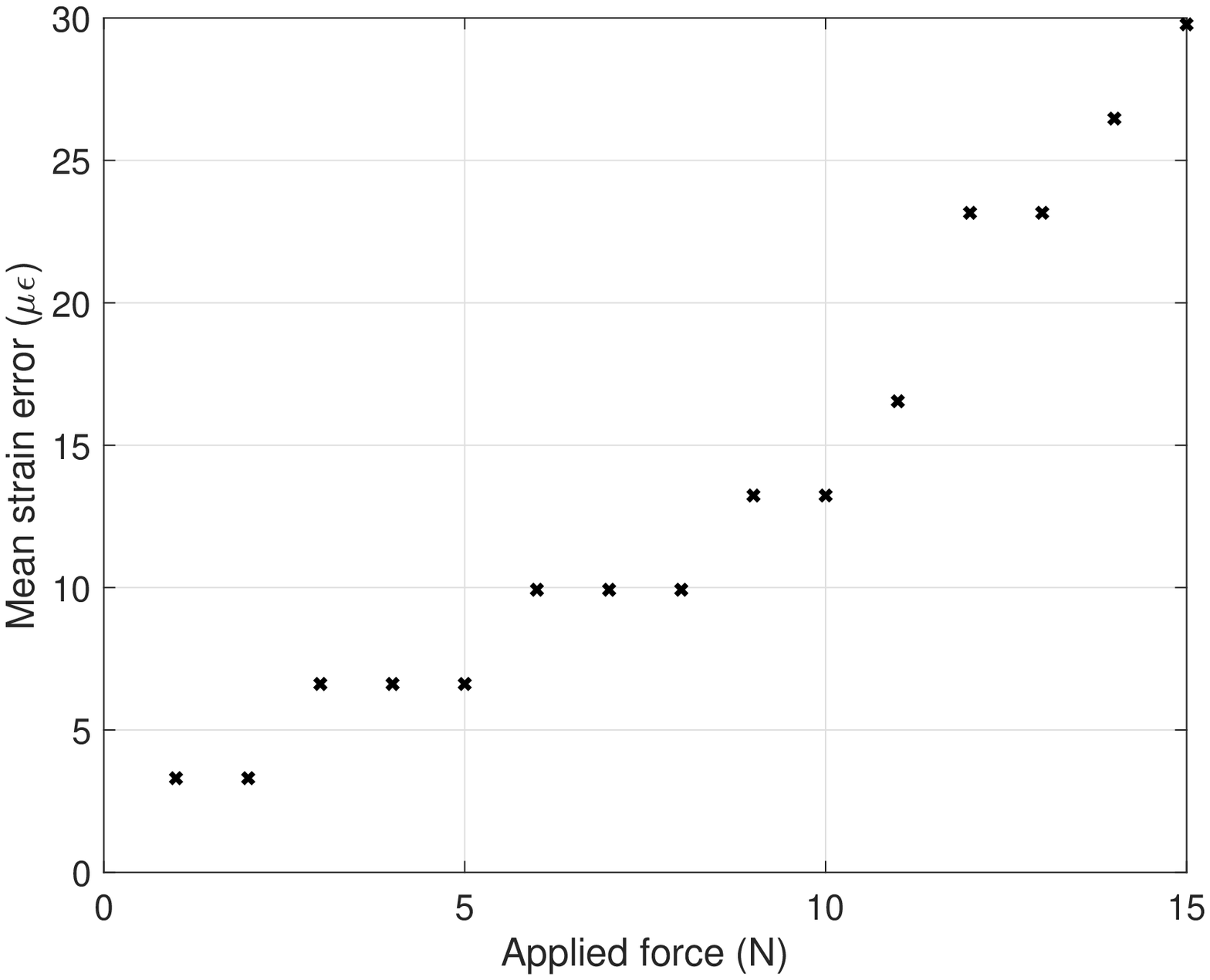}
  \caption{}
  \label{fig:more2}
\end{subfigure}
\caption{(a): Strain estimation using the traditional strain measurement method (red), and the compensated mean strain values based on our algorithm (blue). (b): The mean strain error associated with uncompensated mean strain estimations.}
\label{fig:more_exp}
\end{figure*}

\color{black}
\section {Conclusions} \label{conclusions}
In this paper, we focussed on estimating the mean strain value in the case of  smoothly varying non-uniform strain distributions. \color{black} We proposed a new algorithm that accurately estimates the mean strain value and showed that this shift is related to the average shift of the peak wavelength along the length of the sensor. In order to find this average shift, we introduced an approximation of the well known transfer matrix model and we showed that the information we need can be found by inspection of the side lobes of the reflected spectra. That is, the maximum likelihood estimator of the mean strain is obtained by cross-correlating the side lobes of the reflected spectra of the strained and unstrained sensor. In order to overcome possible estimation problems in practical scenarios, we proposed an alternative two step algorithm where we first measure the Bragg wavelength shift as is done in traditional strain estimation algorithms, and then refine the estimate by cross-correlating the side lobes of both spectra over a small range around the shifted Bragg wavelength. We validated the algorithm using both computer simulations and experimental FBG measurements and showed that the newly proposed algorithm clearly outperforms state-of-the-art strain estimation algorithms by compensating for mean strain errors of around $60 \mu\varepsilon$.  However, in case of non-smooth strain distributions with high variations, and also under extreme birefringence effects, the cross correlation function could lead to local maxima and an incorrect mean Bragg wavelength retrieval. Developing a more robust technique for retrieving the mean Bragg wavelength should be the focus of future studies.

\color{black}

\appendices
\section{Proof of Lemma~\ref{lem:approx}}
\label{app:approx}
\begin{enumerate}
\item By definition, we have
\begin{align*}
\bar{\alpha} &= \frac{1}{M}\sum_{i=1}^M\alpha_i \\
&= 2\pi n_{\rm eff}\Delta z \;  \frac{1}{M}\sum_{i=1}^M \frac{1}{\lambda_{B_i}} \\
&= 2\pi n_{\rm eff}\Delta z \;  \frac{1}{M}\sum_{i=1}^M \frac{1}{\bar{\lambda}_B+\Delta_i}.
\end{align*}
Taylor series expansion of the terms in the summation yields
\begin{align*}
 \frac{1}{\bar{\lambda}_B+\Delta_i} &=  \frac{1}{\bar{\lambda}_B\left(1 +\frac{\Delta_i}{\bar{\lambda}_B}\right)} \\
 &= \frac{1}{\bar{\lambda}_B}\left( 1 - \frac{\Delta_i}{\bar{\lambda}_B} + 
 {\cal O}\left(\!\left(\frac{\Delta_i}{\bar{\lambda}_B}\right)^2\right)\!\right),
\end{align*}
so that
\[
\bar{\alpha} = \frac{2\pi n_{\rm eff}\Delta z}{\bar{\lambda}_B} +   {\cal O}\left(\frac{\overline{\Delta^2}}{\bar{\lambda}_B^3}\right), 
\]
since 
\[
  \frac{1}{M}\sum_{i=1}^M\Delta_i =   \frac{1}{M}\sum_{i=1}^M (\lambda_{B_i} - \bar{\lambda}_B) = 0.
\]
\item
Let $k \in \{1,M\}$. We have
\[
\frac{\xi_i-\xi_{i+1}}{\xi_k} = \frac{(\alpha-\alpha_i)^{-1} - (\alpha-\alpha_{i+1})^{-1}}{(\alpha-\alpha_k)^{-1}},
\]
for $i=1,\ldots,M-1$.
Since
\[
(\alpha-\alpha_i)^{-1} = \frac{\lambda \lambda_i}{\lambda_i-\lambda},
\]
we conclude that
\[
\frac{\xi_i-\xi_{i+1}}{\xi_k} = \frac{(\lambda_k-\lambda)\lambda}{\lambda_k (\lambda_i-\lambda)(\lambda_{i+1}-\lambda)}(\lambda_{i+1}-\lambda_i).
\]
Let $\Delta_{\max} = \max_i|\Delta_i|$ and let $|\lambda-\bar\lambda_B| = \Delta\lambda > \lambda_{\rm th}$ so that $\Delta\lambda-\Delta_{\max} < |\lambda_i-\lambda| < \Delta\lambda+\Delta_{\max}$ for $i=1,\ldots,M$. With this, we have
\[
\left| \frac{\xi_i-\xi_{i+1}}{\xi_k}\right| \leq \frac{\Delta\lambda+\Delta_{\max}}{(\Delta\lambda-\Delta_{\max})^2}\frac{\bar\lambda_B + \Delta\lambda}{\bar\lambda_B-\Delta_{\max}}
|\lambda_{i+1}-\lambda_i|,
\]
so that $|\xi_i-\xi_{i+1}| \ll |\xi_k|$ if 
\[
|\lambda_{i+1}-\lambda_i| \ll \frac{(\Delta\lambda-\Delta_{\max})^2}{\Delta\lambda+\Delta_{\max}}\frac{\bar\lambda_B-\Delta_{\max}}{\bar\lambda_B + \Delta\lambda}< \Delta\lambda.
\]
Since $\lambda_{\rm th}$ is the infimum of $\Delta\lambda$, the condition
$|\lambda_{i+1}-\lambda_i| \ll \lambda_{\rm th}$ is a sufficient condition for $|\xi_i-\xi_{i+1}| \ll |\xi_k|$.
This completes the proof.
\end{enumerate}

\section*{Acknowledgement}
This research is part of the TKI Smart Sensing for Aviation Project, sponsored by the Dutch Ministry of Economic Affairs under the Topsectoren policy for High Tech Systems and Materials, and industry partners Airbus Defence and Space, Fokker Technologies | GKN Aerospace and Royal Schiphol Group.

\bibliography{Aydinlib} 
%

\end{document}